\documentclass[12pt]{article}

\usepackage[utf8]{inputenc}
\usepackage{graphicx}
\usepackage{amsmath,amssymb,amsthm}

\usepackage{xcolor}
\definecolor{linkcolor}{RGB}{0, 0, 255}
\definecolor{citecolor}{RGB}{0, 0, 255}
\definecolor{urlcolor}{RGB}{0, 0, 255}
\usepackage[
  colorlinks=true,          
  linkcolor=linkcolor,      
  citecolor=citecolor,      
  urlcolor=urlcolor         
]{hyperref}

\usepackage{geometry}
\usepackage{fancyhdr}
\usepackage[natbibapa]{apacite}

\setcitestyle{aysep={}}

\newtheorem{Proposition}{Proposition}[section]
\newtheorem{Definition}{Definition}[section]
\newtheorem{Remark}{Remark}[section]

\title{The Causal Second Law}
\author{Bal\'azs Gyenis}
\date{Institute of Philosophy, ELTE RCH, Budapest, Hungary \\
\medskip
\href{mailto:gyepi@hps.elte.hu}{gyepi@hps.elte.hu} \\
\medskip
{\bf Received}: 18 February 2025 | {\bf Revised}: 5 February 2026 \\ {\bf Accepted}: 17 February 2026 | {\bf Online}: 30 April 2026 \\ 
\bigskip
\bigskip
\href{https://onlinelibrary.wiley.com/share/author/NHAHRIVR9BHQUFXEPAXK?target=10.1111/nous.70042}{ePDF} \\
\bigskip
\bigskip
DOI: \href{https://doi.org/10.1111/nous.70042}{10.1111/nous.70042}
}

\begin{document}



\maketitle

\begin{abstract}
I argue that if a special science satisfies certain key assumptions that are familiar from physicalist accounts of the special sciences and from physics, then its causal regularities have an associated notion of entropy, and that this causal entropy cannot decrease from a robust cause to its effect. Due to its analogy with the second laws of thermodynamics and statistical physics, I call the latter conclusion the causal second law. In this paper, I clarify the key assumptions, prove the causal second law, give sufficient conditions for causal entropy increase, relate the causal second law to statistical mechanics and thermodynamics, and argue that the reversibility objection does not threaten it. In addition, I claim that the causal second law is compatible with a non-metaphysical understanding of supervenience and the open systems view, argue that it does not imply a causal time arrow, reflect on relaxing the robustness condition, question whether it is necessary to invoke thermodynamics to show that special sciences' time arrows exist, and discuss a transition-relative-frequency-based, special-science-internal characterization of causal regularities.  
\end{abstract}


\section{Introduction}

Causal regularities have an associated notion of entropy which cannot decrease from a robust cause to its effect, assuming that certain key assumptions, familiar from physicalist accounts of the special sciences and from physics, hold. In the limiting case where, as \citet[173]{Hume1739}  puts it, ``the same cause always produces the same effect,'' an argument for this claim can be succinctly formulated as follows. If the same cause always produces the same effect, then any physical instantiation of the cause, following the laws of physics, must evolve to a physical instantiation of the effect. But then, if the number of distinct physical instantiations of the cause cannot decrease during their time evolution, there must be at least as many physical instantiations of the effect as physical instantiations of the cause. By defining the causal entropy of the effect and the cause as the number of their physical instantiations, it follows that the causal entropy of the effect must be at least as large as that of the cause.

The paper clarifies and motivates the key assumptions---primarily, state-supervenience and measure-preservation---behind a physically more plausible version of this argument and extends the argument to robust causal regularities of any special science (which term I understand here broadly to include folk physics and folk psychology, along with chemistry, biology, psychology, economics, and so on) for which these assumptions hold. With this terminology, the main claim can be restated as follows: if a special science has robust causal regularities, and if it satisfies state-supervenience and measure-preservation, then the special science has an entropy principle tied to its own domain. Due to its analogy with the second laws of thermodynamics and statistical physics, I call this entropy principle the causal second law.\footnote{For a discussion of entropy principles and various formulations of the second law, see \citet{Uffink2001}.}

Section \ref{sect_causalsecondlaw_robust} details the assumptions and proves the causal second law for robust causal regularities on their basis. In Section \ref{sect_strictincrease_robust}, I argue that when an effect has multiple possible causes, a case that is typical in special sciences, causal entropy from an actual robust cause to its effect strictly increases. I also show why causal entropy must generally increase due to the mismatch between the descriptive capabilities of a given special science and physics.

In Section \ref{sect_thermo}, I turn to the relationship between the causal second law and the second laws of thermodynamics and statistical mechanics. I reconstruct an explication of the second law by \citet{Jaynes1965}, arguing that it provides a perspective in which thermodynamic entropy can be understood as causal entropy applied to the special science of thermodynamics. This justifies introducing the term entropy outside thermodynamics and statistical mechanics, associating it generally with causes and effects, and illustrates that, under certain circumstances, practically useful expressions can be derived from the causal second law. I also address time reversal invariance, showing that the reversibility objection does not threaten the exceptionless character of the causal second law for robust causal regularities of special sciences that satisfy the key assumptions.

In Section \ref{sect_motivatingtheassumptions}, I return to the background assumptions; in particular, I argue that the approach can be extended to multiply realized robust causation, that we can relax the metaphysical reading of state-supervenience, and that the relevant sense in which measure-preservation is invoked is compatible with the open systems view. Although the main claims of this paper remain conditional upon the so-relaxed assumptions, the section briefly motivates their general plausibility. I argue that the reasons underlying the strict increase of causal entropy for robust causes also allow robustness to be relaxed to portionality.

Finally, Section \ref{sect_dynamicalsystemsapproach} is devoted to the framework of the discussion, the dynamical systems approach to causation. I point toward a number of open problems and directions for further development. I argue that the approach provides a clear distinction between entropy-from-cause-to-effect and entropy-in-time (Section \ref{subsect_fourcausalasymmetries}); emphasize the description-relativity of the results (Section \ref{subsect_howspecialsciencedescriptionscomeabout}); address the viability of philosophical projects that aim to infer that causes precede their effects from an entropy increase in thermodynamics (Section \ref{subsect_rovelli}); ask whether, in the light of the results of this paper, it is necessary to invoke thermodynamics to show that special sciences' time arrows exist (Section \ref{subsect_thermoreduction}); show that strengthening state-supervenience to history-supervenience permits a special-science-internal characterization of robust regularities (Section \ref{subsect_internalcharacterization}); and address the practical usefulness of the causal second law (Section \ref{subsect_usefulness}).

The main text keeps the discussion informal; formal definitions and proofs are in the Appendix.

\section{The Causal Second Law}\label{sect_causalsecondlaw_robust}

Even though David Hume's attempt to {\em define} the causal relation from regularities has been frequently criticized \citep{Andreas-Guenther2021}, the {\em property} that a cause is regularly followed by its effect is one of the key characteristics of causal claims in the special sciences and everyday life. In this paper, I focus on causal claims that can be reconstructed as stating that a conjunction of causal factors in a situation regularly leads to an effect in a time characteristic of the regularity. For brevity, I will refer to such a conjunction of causal factors as a cause (for important qualifications, see Section \ref{sect_dynamicalsystemsapproach}).

\begin{figure*}
	\begin{center}
		\includegraphics[width=7.5cm]{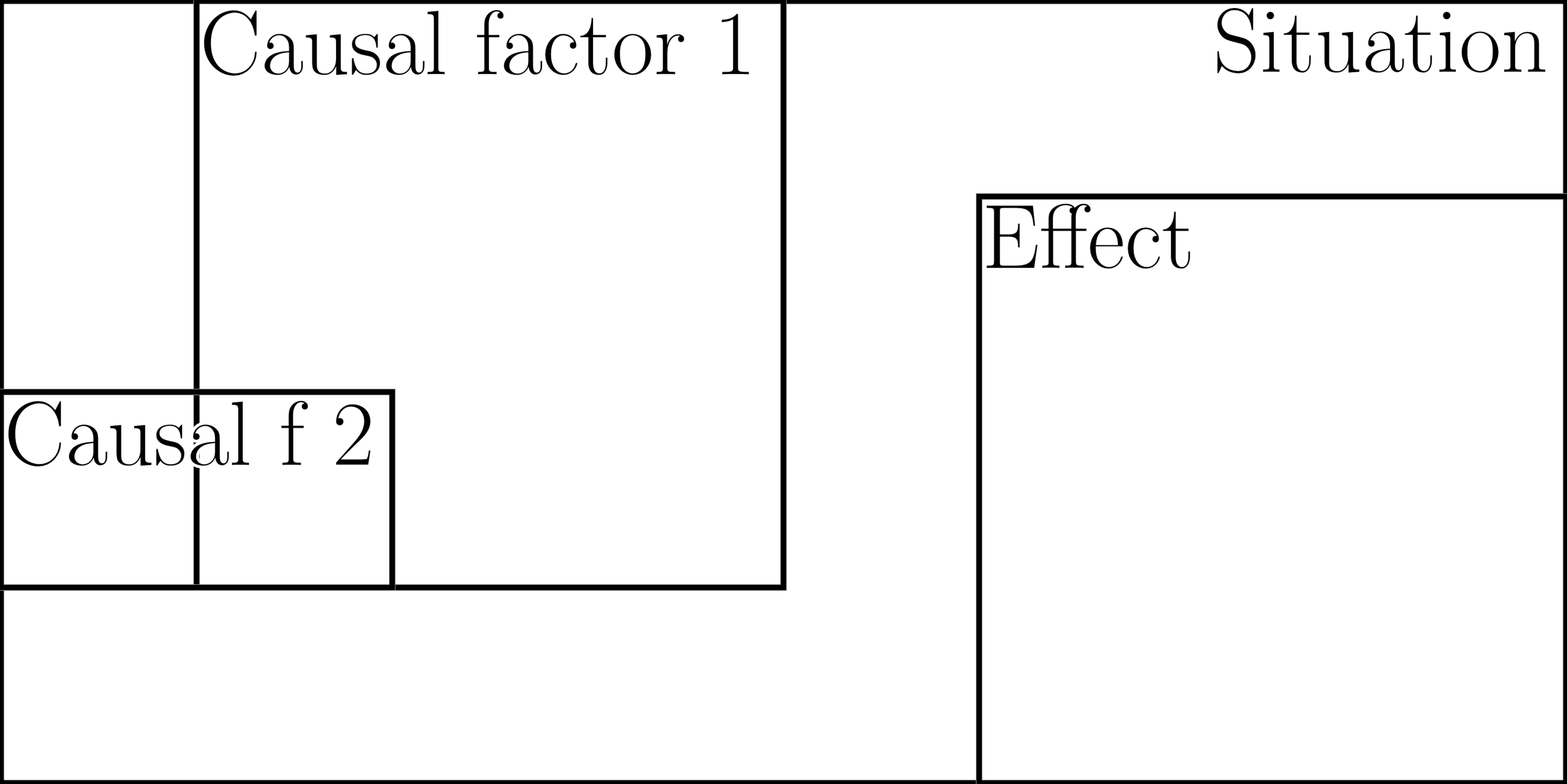}
		\includegraphics[width=7.5cm]{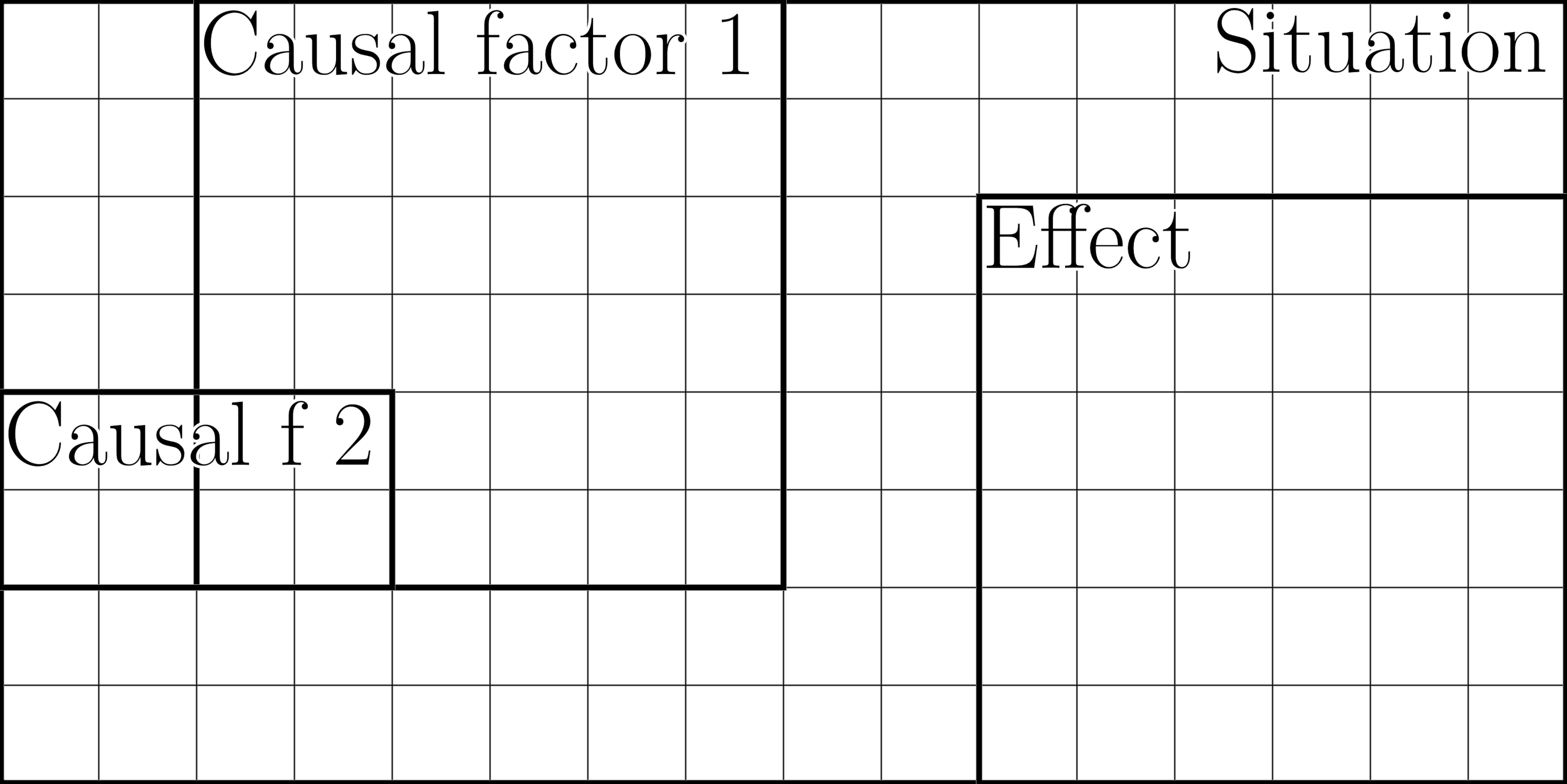}
		\caption{{\em Left: an Euler diagram of special science states of affairs: a situation, two causal factors, and an effect. Right: state-supervenience: correspondence of special science descriptions to physical states.}}
		\label{figure_01}
	\end{center}
\end{figure*}

Examples of such causal claims abound. A large dose of penicillin and a severe penicillin allergy in a patient regularly lead to the patient having gone into anaphylactic shock in ten minutes (medicine). Sleep deprivation and prolonged social isolation in young adults regularly lead to mood having become dysregulated in 2 days (psychology). Printing large amounts of currency and keeping interest rates near zero in a modern economy regularly lead to inflation having significantly risen in 3 months (economics). Rapid tectonic plate movement and locked fault zones in a seismic region regularly lead to a major earthquake having occurred in 30 years (geology). 

The causal second law for robust regularities rests on two key assumptions, which I shall refer to as state-supervenience and measure-preservation. Here I state and illustrate these assumptions; I postpone discussing and relaxing them until Section \ref{sect_motivatingtheassumptions}.

First, I assume what I call {\em state-supervenience} of a given special science on physics: that every description of states of affairs of the special science corresponds to a set of physical states that instantiate said special science description. Consider the folk physics example that a burning match and a large amount of flammable material on the floor of a typical house regularly leads to a burned down house. According to the assumption, there exists a set of physical states that correspond to the folk physics description that `there is a burning match on the floor of a house'. Whenever this folk physics description is true in a particular place and time, it is instantiated by one of the physical states in this set (i.e., by the physical state that specifies the way particles and fields making up the burning match and the house are distributed in that particular place and time). Similarly, there exists a second set of physical states that instantiate flammable material on the floor of a house, and a third set of physical states that instantiate a burned down house. The right side of Figure \ref{figure_01} illustrates the idea by representing a physical state as a cell of a grid, and special science descriptions of states of affairs as cells enclosed within boldly drawn boundaries.

\begin{figure*}
	\begin{center}
		\includegraphics[width=7.5cm]{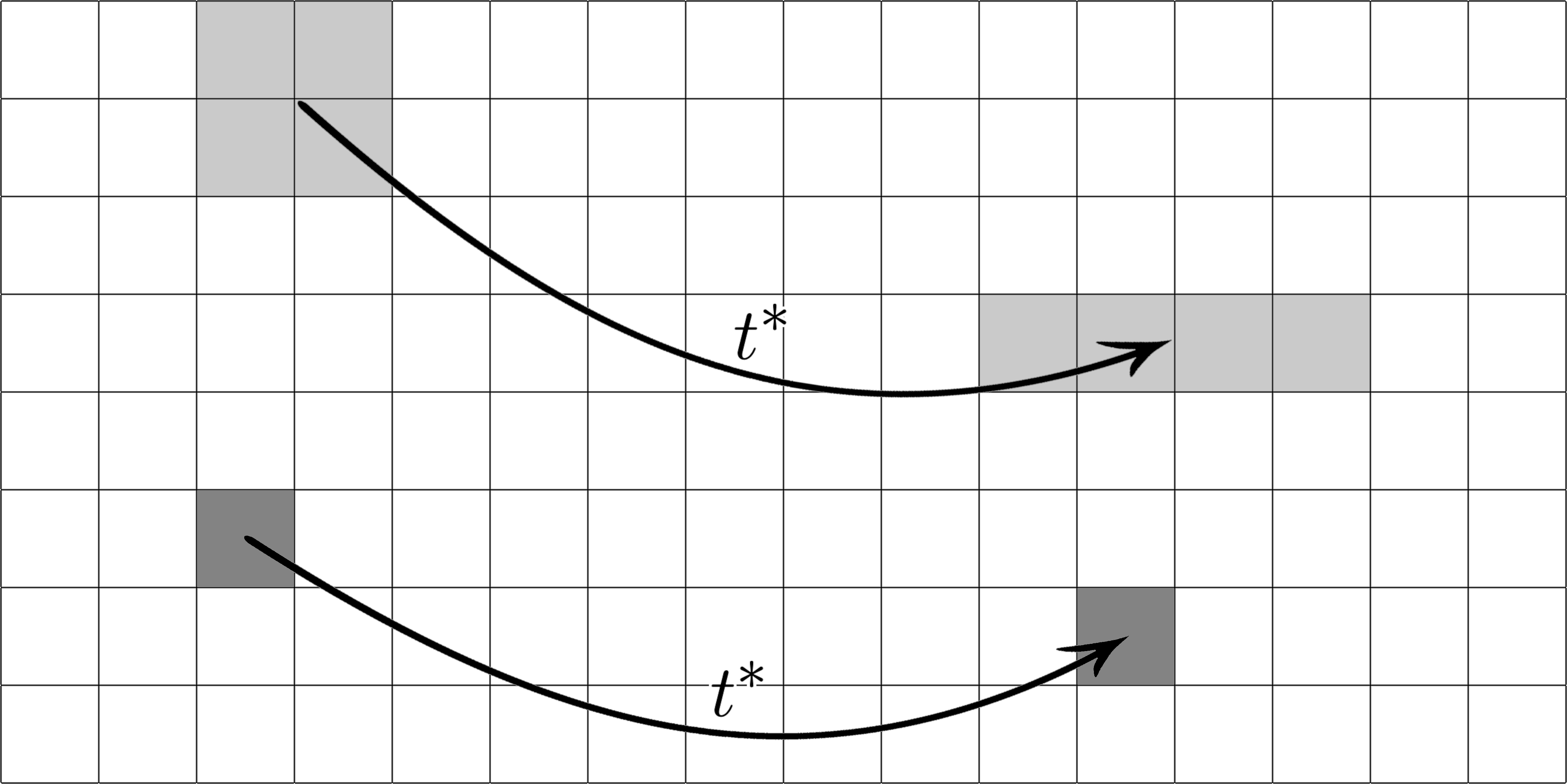}
		\includegraphics[width=7.5cm]{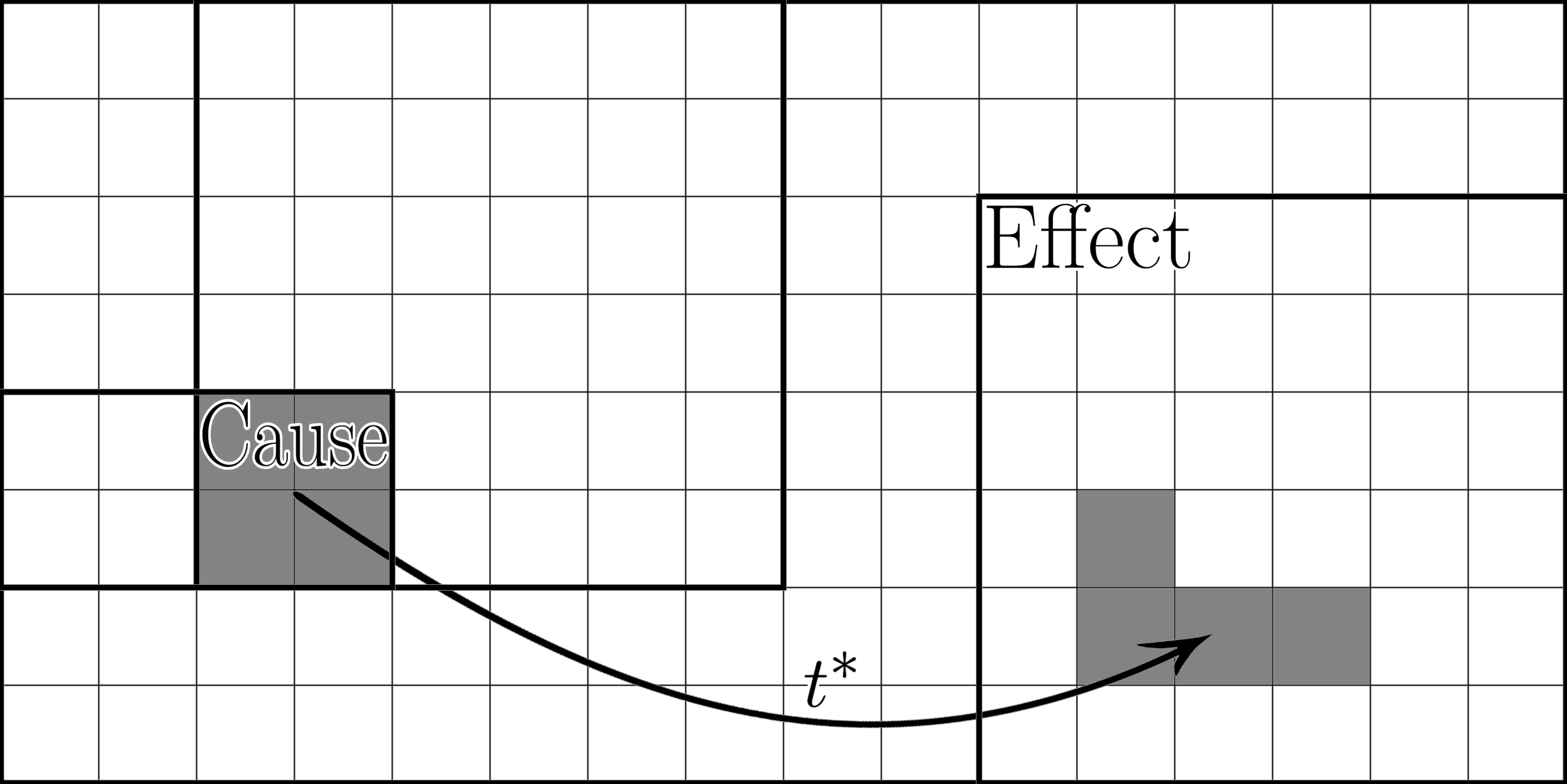}
		\caption{{\em Left: time evolution of one physical state, and (joint) time evolution of a set of physical states. Right: a robust causal regularity, defined with respect to a dynamical system, connects sets of physical states.}}
		\label{figure_02}
	\end{center}
\end{figure*}

Since a special science description corresponds to a set of physical states, the change of a special science description in time depends on how the laws govern the time evolution of the physical states that instantiate it. For simplicity, I assume that the laws of the underlying physical theory are deterministic, and that a physical state belongs to the dynamical state space of the theory, to which I will succinctly refer as {\em phase space} (cf. footnote \ref{footnote_bohmian} on terminology; see Section \ref{sect_motivatingtheassumptions} for further discussion). The time evolution of physical states is illustrated on the left side of Figure \ref{figure_02}: in a given time, a single arrow from one cell to another represents the evolution of one physical state into another, and a single arrow from one set of cells to another represents the evolution of physical states in the first set into states in the second set.

Suppose that, at a particular place and time, there is a burning match and flammable material in a house. This state of affairs leads to a burned down house only if the particular physical instantiation of the cause (i.e., the actual way particles and fields constitute the burning match and the flammable material in the house at this particular place and time), following the laws of physics, evolve to a particular physical instantiation of the effect (i.e., the actual way particles and fields constitute the particular burned-down house). In different places and times, the same cause may be instantiated by different physical states. However, if the cause regularly leads to the effect, then what we assumed for a particular place and time must hold in general: physical states instantiating the presence of a burning match and flammable material in a house must regularly evolve to physical instantiations of a burned-down house.

I call a causal regularity {\em robust} if almost all physical states that instantiate the cause evolve, in a time characteristic of the regularity, to physical states instantiating the effect (Definition \ref{def_robust}, Figure \ref{figure_02}). I will refer to the cause of a robust causal regularity as a {\em robust cause}. Since the causal second law has its cleanest formulation for robust regularities, most of the paper focuses on them; however, Section \ref{sect_motivatingtheassumptions} relaxes robustness to portionality. Section \ref{subsect_internalcharacterization} further shows how strengthening state-supervenience to history-supervenience allows for a special-science-internal characterization of causal regularities that does not invoke instantiation by physical states.

The Appendix contains a mathematical characterization of these informally expressed assumptions using the theory of dynamical systems (Definition \ref{def_dynamicalsystem}). The {\em number} of states instantiating a special science description may be infinite, making it difficult to compare this number for one description with that for another. This technical difficulty is readily resolved by the introduction of a {\em measure} of sets of physical states; intuitively, measure generalizes state counting. Invoking measure theory also allows precise characterization of claims invoking ``almost all'' and ``almost always'': these terms refer to conditions where exceptions form a measure zero set. 

There are multiple ways to generalize state counting to the infinite case, since different mathematical measures can be introduced on a set of physical states. However, my focus here is on a special type of measure, which expresses the idea that the number of physical states remains unchanging as time passes: a dynamical system is called {\em measure-preserving} if the time evolution does not change the measure of any (measurable) set of physical states (Definition \ref{def_measurepreserving}). 

A key physical observation is that Liouville's well-known theorem entails that conservative dynamics of all accepted physical theories that have a Hamiltonian formulation are measure-preserving with respect to the so-called symplectic phase-volume measure. The list includes classical mechanics, statistical mechanics, electrodynamics, and non-exotic models of general relativity among others. In the rest of this paper, I will succinctly refer to the symplectic phase-volume measure (or to its theory-appropriate generalization\footnote{Bohmian quantum mechanics fits Definition \ref{def_measurepreserving} of a measure-preserving dynamical system with the dynamical state space $P$ that is the Cartesian product of the projective Hilbert space and the particle configuration space (however, since this is not a Hamiltonian formulation, strictly speaking the so-defined dynamical state space $P$ should not be referred to as ``phase space'' and the invariant measure as a ``phase volume''). Therefore, mathematical results of this paper generalize to Bohmian quantum mechanics. For quantum theories with infinite degrees of freedom, other analogous notions of physical state counting may apply. Extending the arguments of this paper to theories that cannot be cast as a measure-preserving dynamical system shall be the focus of future research. \label{footnote_bohmian}}) as {\em phase volume}. 

With these technical remarks in mind, I define the {\em causal entropy} of a cause and of an effect as the phase volume of the set of physical states instantiating each.\footnote{In certain contexts it may be convenient to define causal entropy as a monotonic function of the phase volume; notably, if causal entropy is defined as the scaled logarithm of the phase volume, then it becomes a generalization of the so-called Boltzmann entropy of statistical mechanics to any special science satisfying state-supervenience; see Section \ref{sect_thermo}.} 

State-supervenience and measure-preservation immediately yield (Proposition \ref{prop_causalsecondlaw}):
\begin{description}
	\item The {\bf (dynamical) causal second law} (for robust causal regularities): causal entropy cannot decrease from a robust cause to its effect.
\end{description}

The proof can be followed on the right side of Figure \ref{figure_02}. If the causal regularity from a cause to an effect is robust, then almost all physical states instantiating the cause evolve, in a characteristic time, to physical states instantiating the effect. But if the dynamics is measure-preserving, then the phase volume of the effect must be at least as large as that of the cause to accommodate the physical states arriving from the cause, which maintain the same phase volume during their time evolution. Thus, the causal entropy of a robust cause cannot be larger than the causal entropy of its effect.\footnote{{\em Mathematically}, the proof is straightforward. However, the literature usually invokes measure-preservation to argue {\em against} the very possibility of a dynamical systems-based derivation of a common formulation of the second law of thermodynamics (a formulation that implies that once thermodynamic equilibrium is reached it is never left again; see, e.g., \citealt[59--60]{Hemmo-Shenker2012}). In this context, the fact that measure-preservation becomes the key assumption {\em entailing} the causal second law may not be {\em conceptually} straightforward. (The two arguments are compatible because thermodynamic equilibrium corresponds to the macrostate with the largest phase volume, hence, under measure-preservation, it can only be the {\em effect} state of a robust causal regularity, but it cannot be a robust {\em cause} state of another effect. Thus, the causal second law for robust causal regularities does not prohibit fluctuation away from thermodynamic equilibrium. See also footnote \ref{footnote_boltzmannian}.) \label{footnote_comparetwolaws}}

\section{The Increase of Causal Entropy}\label{sect_strictincrease_robust}

When the causal entropy stays the same from a robust cause $C$ to an effect $E$, it may be that $E$ is also a robust cause of $C$, as in the case of an oscillating phenomenon. When causal entropy strictly increases, measure preservation entails that $E$ cannot be a robust cause of $C$, accounting for a causal asymmetry between robust causes and their effects (for further discussion, see Section \ref{subsect_fourcausalasymmetries}). 

A strict increase of causal entropy also allows relaxing robustness (Section \ref{subsect_relaxingrobustness}). This section gives two sufficient conditions for strict causal entropy increase.

\subsection{Multiplicity of Possible Causes}\label{subsect_multiple}

Special science and everyday causal claims often leave effects underspecified, allowing multiple distinct possible causes. A burning match in the presence of flammable material in a house (cause 1) regularly leads to a burned-down house (effect). But a lit lighter in the presence of flammable material in a house (cause 2) also regularly leads to a burned-down house (same effect), as does a spark from an exposed electrical circuit or an operating flamethrower (Figure \ref{figure_03}).

\begin{figure*}
	\begin{center}
		\includegraphics[width=7.5cm]{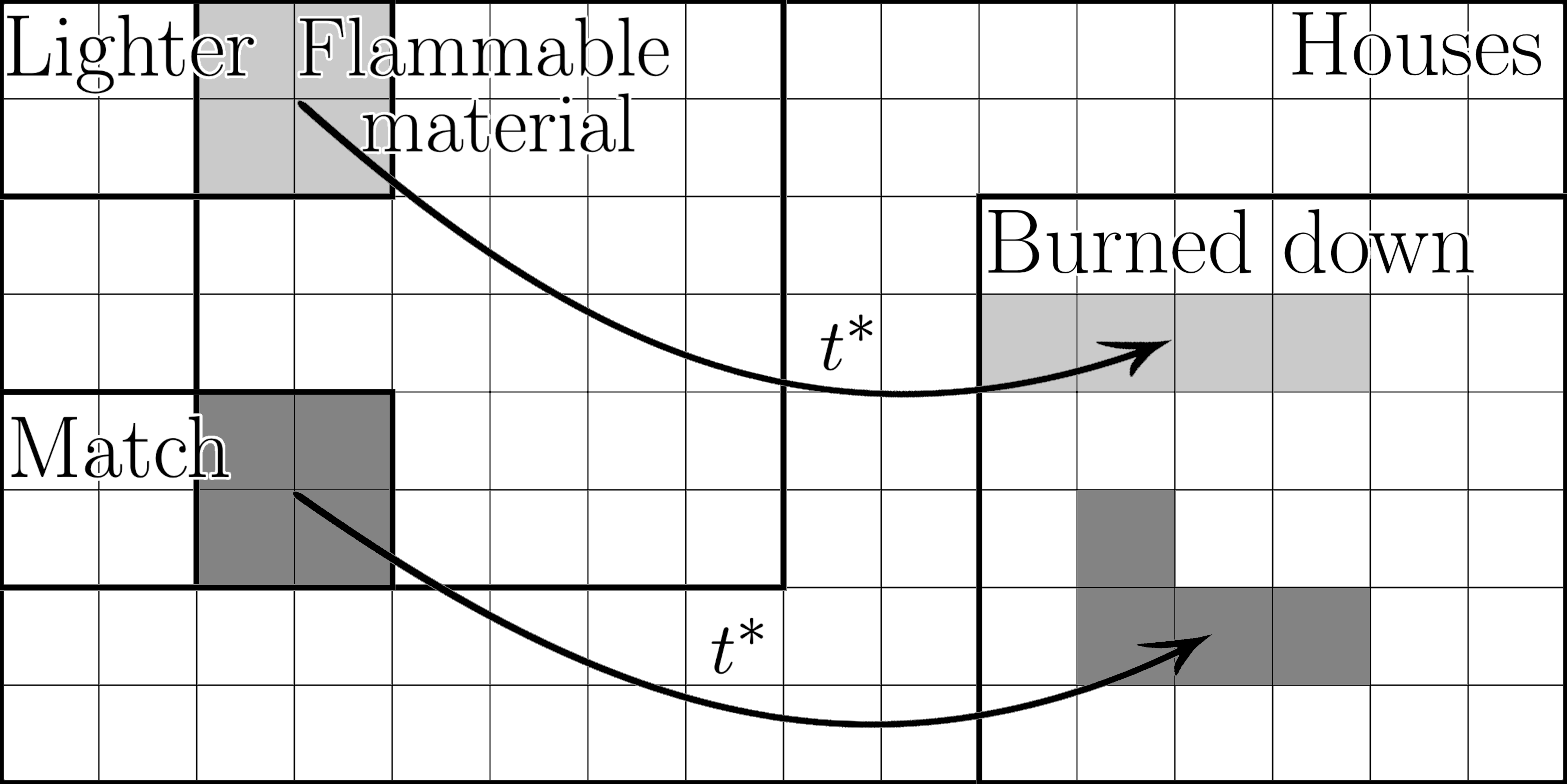}
		\caption{{\em If there are multiple distinct possible causes which robustly lead to the same effect, then the phase volume of the effect is larger than the phase volume of any of the possible causes.}}
		\label{figure_03}
	\end{center}
\end{figure*}

If there are multiple distinct {\em possible} causes that robustly lead to the same effect, then the causal entropy in any {\em actual} robust causal process must increase from the cause to the effect. Following the argument of the previous section, the entropy of the effect is at least as large as the entropy of the disjunction of all the possible causes. Since there are several distinct possible causes, the entropy of the disjunction of all possible causes is greater than that of any possible cause by itself. Hence, for any actual robust causal process, the entropy of the effect is larger than the entropy of the actual cause (Proposition \ref{prop_robust_multiple}).\footnote{Entropy increase due to multiple distinct possible causes also holds if their characteristic times differ, under a condition that can be verified by the special sciences. Suppose the characteristic time of a burning match in the presence of flammable material (but without an operating flamethrower) robustly leading to a burned down house is 1 h, but an operating flamethrower in the presence of flammable material (but without a burning match) robustly leads to a burned-down house in 50 min.  The condition for entropy increase in Proposition \ref{prop_robust_multiple} demands that there are at least some cases in which, when a fire is started by a burning match (but without the presence of an operating flamethrower), we do not find ourselves 10 min later without either a fire or a burning match, only with an operating flamethrower that has just started the fire.}

\subsection{Mismatching Descriptive Capabilities}\label{subsect_mismatching}

One may argue that the result of causal entropy increase in the previous subsection depended on not considering a sufficiently wide phase space region---namely, the set of all physical states that lead, in the same characteristic time, to the effect (the ``pull-back'' of the effect)---as the most general cause of the effect. However, nothing guarantees that this wide phase space region corresponds to any state of affairs that is describable by the special science in question.

To illustrate with a simple example, assume that the only descriptions of states of affairs expressible by a given special science correspond to the regions enclosed within boldly drawn boundaries in Figure \ref{figure_04}. Thus, regions that can be described by the special science include three causal factors, two possible causes (the $2 \times 2$ cell regions colored with lighter and darker grey on the right side of Figure \ref{figure_04}), the effect (the $6 \times 6$ cell region colored with various shades of grey on the left side of Figure \ref{figure_04}), and some other descriptions such as `either one or the other possible cause is present,' `neither of the causal factors nor the effect is present,' etc. However, in our example, the pull-back of the effect (the entire curvy, gray-colored $36$-cell region on the right side) cannot be described by the special science as a state of affairs. If the special science wanted to assert a state of affairs that covers all cases in which the pulled-back effect obtains, the only describable states of affairs it has at its disposal would correspond to a region that properly includes the pulled-back effect. In Figure \ref{figure_04}, the smallest describable region that properly includes the pull-back of the effect is the completely vague description that something occurs; however, this description is not a robust cause of the effect.

Thus, as Figure \ref{figure_04} illustrates, for a given effect there may be no single state of affairs expressible by the special science that can both include all physical states leading to the effect and also serve as a robust cause for it. Such a mismatch between the descriptive capabilities of a special science and physics then entails a strict increase in causal entropy from any expressible actual cause to the effect (Proposition \ref{prop_robust_mismatch}).

\begin{figure*}
	\begin{center}
		\includegraphics[width=7.5cm]{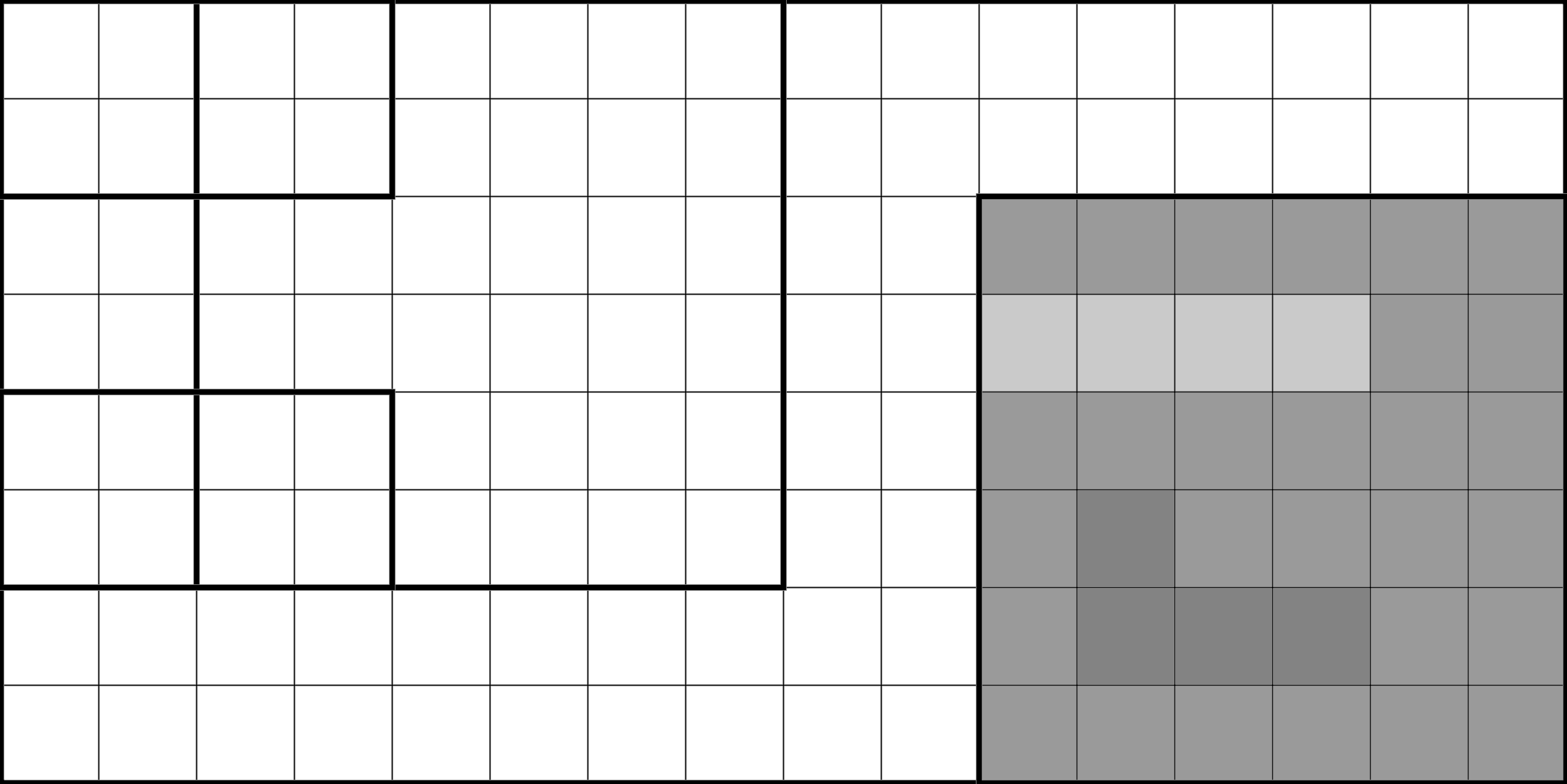}
		\includegraphics[width=7.5cm]{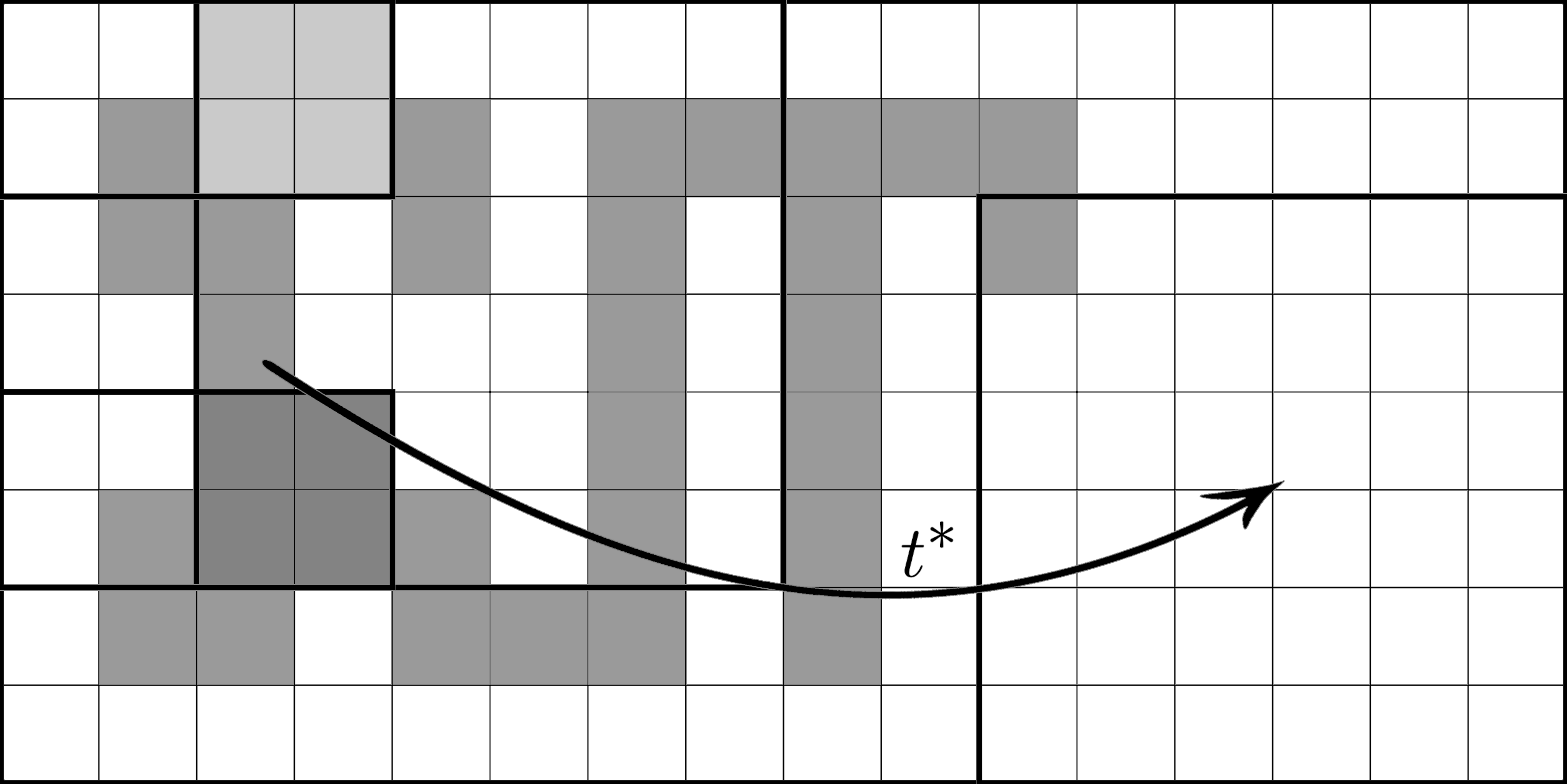}
		\caption{{\em Left: the set of all physical states instantiating the effect. Right: the set of all physical states that evolve, in a characteristic time $t^*$, to the effect.}}
		\label{figure_04}
	\end{center}
\end{figure*}

The increase in causal entropy from cause to effect due to the mismatch is {\em relative} to the descriptive capabilities of the special science and the underlying physical theory; however, given these descriptive capabilities, it is {\em objective} in the sense that it does not depend on the assumption that agents, their subjective beliefs, or their more-or-less coarse-grained informational states exist. 

Descriptions of different special sciences partition the same phase space in different ways (see Section \ref{subsect_howspecialsciencedescriptionscomeabout}). For different special sciences, the entropy increase due to the mismatch may thus be different; for instance, it is less for a special science that provides more refined descriptions of states of affairs. In the limit where the descriptive capability of a special science matches that of the underlying physical theory (when the regions enclosed within bold boundaries narrow down to the cells), robust causation reduces to determinism, and the increase in causal entropy from cause to effect due to the mismatch disappears.

\section{The Causal and the Thermodynamic Second Laws}\label{sect_thermo}

\subsection{Jaynes' Explication of the Phenomenological Second Law}\label{subsect_jaynes}

This section shows, in two steps, that by following a reasoning of \citet{Jaynes1965} in reverse, the thermodynamic (experimental) entropy of classical thermodynamics can be naturally understood as causal entropy applied to the special science of classical thermodynamics. This justifies generalizing the term entropy to robust causes and their effects of special sciences for which state-supervenience holds, and shows that it is possible to move beyond the causal second law to practically useful relations between observable quantities describing the behavior of boxes of gases.

The main derivation of \citep[Section IV]{Jaynes1965} for the increase of experimental entropy is known to experts and has been criticized for containing several problematic steps (see, e.g., \citealt[172--174]{Frigg2008}; \citealt[255--257]{Sklar1993}). However, my focus here is not on this main proof, but on Jaynes' intended explication of it (ibid., Section V). This explication has received little philosophical attention.

I start by reconstructing Jaynes' explication as an abstract reasoning independent of his main proof (with a twist regarding how region $R_0$ is defined, explained later), keeping his notation. Suppose that, at an initial time $0$ and a final time $t$, initial and final thermodynamic states are defined by fixing the values of $n$ thermodynamic quantities ${X_1(0), X_2(0), \ldots, X_n(0)}$ and ${X_1(t), X_2(t), \ldots, X_n(t)}$, respectively. Suppose that we can associate the initial thermodynamic state with a phase space region $R_0$ containing all and only those microstates of the phase space that instantiate it in an experimental circumstance; by the same token, we associate another region $R'$ with the final thermodynamic state. Finally, suppose that an experimentally reproducible process takes the first thermodynamic state to the second in such a way that the microphysical equations of motion are phase-volume-preserving.

Under these conditions, Jaynes argues that the Boltzmann entropy of the initial thermodynamic state cannot be larger than the Boltzmann entropy of the final thermodynamic state. Suppose that we prepare the initial thermodynamic state. This is then instantiated by a microstate $p_0$ of the region $R_0$. The underlying physical dynamics $U_t$ takes $p_0$ to the final microstate $p_t = U_t p_0$. If the experimental process takes the initial thermodynamic state to the final thermodynamic state, then $p_t$ must instantiate the final thermodynamic state, that is, $p_t \in R'$. Setting up the same experiment many times, $p_0$ varies across $R_0$, and hence $p_t$ varies across $R_t = U_t R_0$. Thus, if the process taking the initial thermodynamic state to the final thermodynamic state is experimentally reproducible, then for every initial microstate $p_0 \in R_0$, the corresponding final microstate $p_t = U_t p_0$ must instantiate the final thermodynamic state, that is, every $p_t \in R'$. Since we assumed that the equations of motion preserve the phase volume, the phase volume of $R_0$ must be the same as that of $R_t$. Hence, $R_t \subseteq R'$ means that the phase volume $W_0$ of the set of microstates $R_0$ instantiating the initial thermodynamic state cannot exceed the phase volume $W'$ of the set $R'$ instantiating the final thermodynamic state. The scaled logarithms of the phase volumes of thermodynamic states are their Boltzmann entropies. Hence, the Boltzmann entropy of the initial thermodynamic state cannot be larger than the Boltzmann entropy of the final thermodynamic state, which is an expression of the second law of statistical mechanics.

The analogy is straightforward. Defining the initial thermodynamic state by fixing the values of thermodynamic quantities is analogous to defining a cause by causal factors. By the same token, the final thermodynamic state corresponds to an effect. The assumption that the initial and final thermodynamic states can be associated with respective sets of microstates instantiating them is analogous to state-supervenience. A ``reproducible experimental process" connecting the initial and final thermodynamic states is analogous to a robust causal regularity from a cause to an effect. Boltzmann entropy is analogous to causal entropy. Finally, Jaynes' abstract second law for statistical mechanics, interpreted as a law characterizing reproducible experimental processes, is analogous to the causal second law for robust causal regularities.

Jaynes himself suggested that his reasoning extended beyond equilibrium thermodynamics. His reflection is terse enough to permit it to be quoted in full:
\begin{quote}
	It is suddenly clear that the second law is only a very special case of a general restriction on the direction of any reproducible process, whether or not the initial and final states are describable in the language of thermodynamics; the expression $S = k \log W$ gives a generalized definition of entropy applicable to arbitrary nonequilibrium states, which still has the property that it can only increase in a reproducible experiment. \citep[396]{Jaynes1965}
\end{quote}

So far, Jaynes' reconstructed explication does not commit mathematical errors, does not depend on assumptions about initial or boundary conditions of the universe, and, though originating in microphysical dynamics, it leads to an entropy principle that is exceptionless in the sense explained in Section \ref{subsect_tri}. I am not aware of any other foundational approach to the second law that has all these properties. 

However, this cheerful picture comes with at least three caveats. First, the experimental reproducibility condition makes it clear that the derivation does not explain the second law from {\em purely} microphysical premises, which is what some other approaches have attempted to achieve: on the contrary, the derivation explicitly assumes the existence of a reproducible process between macrostates. Second, as I will argue in Section \ref{subsect_fourcausalasymmetries}, one must be careful when interpreting the increase of entropy {\em from cause to effect} as an increase {\em in time}. Jaynes' argument shows that the Boltzmann entropy of the initial thermodynamic state cannot exceed that of the final state; he implicitly assumed that the initial state precedes the final state in time, but this assumption did not play a role in his derivation (as explained in Section \ref{subsect_rovelli}, this becomes less of a caveat once we realize that other approaches to the second law also share this property). Third, the derivation is abstract, and, since it provides no link to equilibrium thermodynamics, its practical value is open to question.

For describing the behavior of boxes of gases, the third problem may be alleviated by reading Jaynes' reasoning in reverse. The aforementioned twist in my reconstruction of Jaynes' argument was that I defined the phase space region $R_0$ so as to have an interpretation that renders the argument valid. Reconstructing the argument this way, the natural question is the following: can Jaynes give another characterization of the region $R_0$ that both satisfies the interpretation of $R_0$ needed for the proof to be valid, and provides a link to experimentally verifiable quantities?

Jaynes' answer would be the following. Suppose that the initial thermodynamic state is in complete thermal equilibrium; experimentally speaking, suppose it was prepared by only controlling for the $n$ thermodynamic quantities, and these quantities are unchanging for a ``sufficiently long" time. We choose the classical $6N$-dimensional phase space with a phase space point representing the positions and momentums of $N$ particles, and assume a classical Hamiltonian $H$ with a symmetric potential energy function depending only on the relative coordinates. Let $W_N(x_1,p_1; x_2,p_2; \ldots, x_N, p_N)$ be the canonical $N$-particle distribution function that agrees with the experimental internal energy $U_e$ of the initial thermodynamic state in the sense that $U_e = \int W_N H d^3 x_1 \cdots d^3p_N$. For $0<\epsilon<1$, define the ``high probability" region $R(\epsilon)$ as consisting of all phase space points for which $W_N \geq C$, where the constant $C$ is chosen so that the total probability of finding the system somewhere in the region $R(\epsilon)$ is $(1-\epsilon)$. Let $W(\epsilon)$ denote the phase volume of $R(\epsilon)$. A theorem by Shannon shows that $\lim_{N \to \infty}((S_G - \log W(\epsilon)) / N) = 0$, independently of the choice of $\epsilon$, where $S_G$ is the Gibbs entropy. Jaynes interprets this theorem as follows: it does not matter how we set the value of $\epsilon$ (what we mean by ``high probability"); if our thermodynamic system contains a sufficiently large number of particles, then the Boltzmann entropy of the region $R(\epsilon)$ equals the Gibbs entropy calculated from the canonical distribution. Thus, we choose a suitable value of $\epsilon$, and let $R_0 = R(\epsilon)$, $W_0 = W(\epsilon)$. In each experiment, the microstate instantiating the experiment is assumed to be selected from the so-defined region $R_0$ with a probability described by $W_N$. The required interpretational link is then established as follows:
\begin{quote}
	The aforementioned variational property of the canonical ensemble now implies that, of all ensembles agreeing with this initial data in the sense of (16) [in the sense that the mean value of the Hamiltonian equals $U_e$], the canonical one defines the {\em largest} high-probability region. The phase volume $W_0$ therefore describes the full range of possible initial microstates; and not some arbitrary subset of them; this is the basic justification for using the canonical distribution to describe partial information. \citep[396]{Jaynes1965}
\end{quote}
These conditions recreate an approximate, probabilistic, but---being constrained by the choice of a classical phase space---physically more realistic version of the state-supervenience assumption that $R_0$ contains all and only those phase space points that instantiate the initial thermodynamic state in some experimental circumstances. 

The approximate agreement between the Boltzmann entropy and the Gibbs entropy, calculated from the canonical distribution, provides the missing experimental link: in complete thermal equilibrium, all reproducible thermodynamic properties are represented by the canonical distribution. In particular, Jaynes argues that the Gibbs entropy equals the so-called {\em experimental entropy} of classical thermodynamics (whose relevant quantities can be introduced as functions of appropriate partial derivatives of the Gibbs entropy). If a similar reasoning could be carried out for the final thermodynamic state, then, since preservation of phase volume can be guaranteed by assuming that the experimental process is adiabatic, and since the non-decrease of the Boltzmann entropy from the initial to the final thermodynamic state for phase-volume-preserving experimental processes was established above, we would arrive at a proof of the non-decrease of experimental entropy between the initial and final states, which is the so-called {\em phenomenological second law} of thermodynamics (for reproducible processes).

Evaluating the merits and pitfalls of this reversed argument for the phenomenological second law lies beyond the scope of this paper and shall be pursued elsewhere. For my present purposes, it is sufficient that this reversed argument shows, through the example of modeling the behavior of boxes of gases, that it is possible to create a plausible and practically useful link between the causal second law and experimentally observable quantities. In particular, Jaynes' reversed argument illustrates how, under familiar assumptions, the practically useful thermodynamic entropy can be seen as causal entropy applied to the special science of thermodynamics.

\subsection{The Reversibility Objection}\label{subsect_tri}

Time reversal invariance of physical dynamics is often thought to contradict the exceptionless character of the second law of classical thermodynamics because of the so-called reversibility objection (for an overview and references, see \citealt{Callender2021}). In this section, I explain the reversibility objection with a simple example, and show why it does not threaten the exceptionless character of the causal second law for robust regularities within the present framework.

For simplicity, let us assume that the physical theory underlying classical thermodynamics is the kinetic theory, according to which the microstate of a gas is fixed by the positions and velocities of its particles. We can introduce a conceptual operation between microstates, the {\em time reversal operator} $T$: given a microstate $p$ as an input, the output is an (almost always different) microstate $q =Tp$, in which every particle has the same position as in $p$ but its velocity points (with the same magnitude) in the opposite direction. 

Suppose that initially a gas is in the microstate $p_0$ in a stationary box A, and after $t^*$ seconds the particles have bounced around, leaving the gas in a new microstate $p_{t^*} = U_{t^*} p_0$. Suppose that, in another stationary box B (a copy of box A), a gas happens to be in the exact time-reversed microstate $q_0 = Tp_{t^*}$, and suppose that we again wait precisely ${t^*}$ seconds for the particles in box B to bounce around, after which the gas in box B arrives to microstate $q_{t^*} = U_{t^*} q_0$. Since the positions of the particles in $q_0$ (relative to box B) were the same as the positions of the particles in $p_{t^*}$ (relative to box A), but their velocities were flipped around, we expect to see all gas molecules in box B to follow a `reverse path', and so ${t^*}$ seconds later all particles should arrive back to the same position (relative to box $B$) as the particles were in $p_0$ (relative to box $A$), but with the opposite velocity. Thus, flipping back the velocities of $q_{t^*}$ should yield the same microstate $p_0$ with which we started in box A. In sum, we expect $p_0 = T^{-1} U_t T U_t p_0$ to hold for every microstate $p_0$ and every time $t \geq 0$. Since non-singular Newtonian collision dynamics indeed has this property, it is called {\em time reversal invariant}.\footnote{For a detailed discussion of time reversal, see \citet{Roberts2022}.}

After this preparation, the reversibility objection runs as follows: if (i) every macrostate has an entropy defined by its phase volume, (ii) the entropy of a microstate equals that of the macrostate it instantiates, (iii) the entropy of a microstate equals that of its time-reversed microstate, and (iv) the physical dynamics is time reversal invariant, then for every trajectory $p_t = U_t p_0$ along which the entropy increases, there exists another trajectory $q_t = U_t q_0$ along which the entropy decreases. Ergo, the second law is either satisfied trivially (entropy always stays constant), or it cannot be exceptionless.\footnote{A usual way so-called Boltzmannian approaches (which embrace conditions (i)-(iii)) respond to the reversibility objection is by accepting that entropy may occasionally decrease, but by arguing that decreases in entropy happen rarely and only for relatively short amounts of time, and hence the second law still holds statistically. Three premises are invoked to arrive at this conclusion: an inverse relationship between closeness of macrostates to being an equilibrium state and their phase volume (strongly suggested by standard calculations in statistical mechanics), some assumption ensuring sufficiently random wandering of the trajectory through phase space (e.g., ergodicity), and the Past Hypothesis (the universe started in a low entropy state; see \citealt{Loewer2007}). \label{footnote_pasthype}}

There are three reasons why the reversibility objection does not apply to the causal second law. Suppose that $C$ is a causal-entropy-increasing robust cause of $E$, and hence the phase volume of $E$ is greater than that of $C$. First, and conclusively, even if we accept assumptions (i)--(iii) of the reversibility objection, and even if we assume that the time-reversed $TC$ (the set of physical states that are time reverses of physical states in $C$) and $TE$ also correspond to special science descriptions, since the time reversal operator preserves the phase volume, it cannot be the case that almost all physical states evolve from $TE$ to $TC$, since $TE$ has a larger phase volume than $TC$, and the phase volume of $TE$ is preserved by the measure-preserving dynamics. Hence, $TE$ cannot be a robust cause of $TC$; therefore, we do not have an exception to the causal second law, which would only demand entropy increase if $TE$ were a robust cause of $TC$.

Second, there is no reason to expect that if $C$ and $E$ correspond to special science descriptions, then $TC$ and $TE$ must also correspond to special science descriptions (even though physical states are closed under time reversal, and thus if $C$ and $E$ are phase space regions, then $TC$ and $TE$ are also phase space regions). Time reversal is a conceptual operation defined on the level of physical states, and even if it were possible to construct a physical operation which implements a piece-wise time reversal of a set of physical states (for our current microscopic theories this is not feasible even for individual microstates, let alone a set of them), there is no reason to expect that such a physical state-level operation has an interpretation as a special science operation under which the set of special science descriptions should be closed. In short, there is no reason to assume that, in general, special science descriptions are closed under time reversal.\footnote{For macrostates (for case (1) descriptions of thermodynamics, see Section \ref{subsect_howspecialsciencedescriptionscomeabout}) this property is frequently assumed to hold. However, there is no reason to assume that this special case generalizes for descriptions of other special sciences (or even to case (3) descriptions of thermodynamics).}

Third, although the causal second law is compatible with (i)--(iii), the proof does not rely on them. As for (i), I associate causal entropy only with causes and effects of special science causal regularities, which form a small subset of all special science descriptions (the mere fact that phase volume is a well defined number for all descriptions does not necessitate its interpretation as causal entropy). As for (ii) and (iii), I associate causal entropy only with special science causes and effects, and not with the microstates instantiating them. Hence, the reversibility objection also does not arise because (i)-(iii) are not necessary for the causal second law.\footnote{Assumptions (i)--(iii) can be naturally incorporated in the present framework, in which case the three premises of Boltzmannian approaches mentioned in footnote \ref{footnote_pasthype} could be similarly invoked to explain the statistical likelihood of a decrease of causal entropy along a microstate trajectory. However, such a decrease would not defeat the causal second law (causal entropy on a microstate trajectory can decrease only when it connects special science descriptions of states of affairs among which there is no robust regularity---for example, when there is no causal connection, or when the causal regularity is merely portional, see Section \ref{sect_motivatingtheassumptions}). See also footnote \ref{footnote_comparetwolaws}.\label{footnote_boltzmannian}}

Thus, the reversibility objection does not threaten the exceptionless character of the causal second law.

\section{Relaxing the Assumptions}\label{sect_motivatingtheassumptions}

\subsection{Supervenience and Measure-Preservation}\label{subsect_relaxingsupervenience}

A straightforward reading of the key assumptions of Section \ref{sect_causalsecondlaw_robust} is metaphysical: there exists a single fundamental physical theory that can be cast as a measure-preserving deterministic dynamical system, and every description of a state of affairs of a special science corresponds to a set of the theory's phase space points. On this reading, the causal second law becomes a statement about the metaphysical fabric connecting the special sciences to fundamental physics.

However, this metaphysical reading can be significantly relaxed without losing the causal second law. The causal second law can be generalized to explicitly allow for multiple realization, in the sense that both the constitution and lawlike behavior of different instantiations of a robust cause and its effect may be described by different dynamical systems (Proposition \ref{prop_multiply_realized_second_law}). As I shall argue elsewhere, the assumption that the underlying physical theory is deterministic may also be relaxed: measure-preserving stochastic systems (a particular type of transition structures of Definition \ref{def_transitionprobabilitystructure}) likewise entail the causal second law.

The one-to-one correspondence of special science descriptions with phase space regions may be relaxed to a correspondence with a ``high probability'' phase space region; in fact, as we have seen, such a relaxation of state-supervenience allows for a plausible inference from the causal second law to the phenomenological second law of thermodynamics. 

Supervenience can also be understood in terms of empirical adequacy instead of a metaphysical correspondence. A special science is said to {\em theory-supervene} on an underlying theory if, whenever the special science assigns different empirically adequate special science descriptions to a pair of space-time regions, the underlying theory also assigns different empirically adequate underlying descriptions to the same pair of space-time regions. For a precise definition of this notion of theory-supervenience and the thesis of {\em empirical structure physicalism} (stating that every current special science theory-supervenes on current and on future physics), which has recently been introduced and defended against various philosophical challenges, the reader is referred to \citep{Gyenis2025b}. Theory-supervenience of a special science on an underlying theory that can be cast as a measure-preserving dynamical system is sufficient to invoke the argument for the causal second law, emphasizing that the resulting notion of causal entropy is relative to the descriptive capabilities of both the special science and the underlying theory and is restricted to the domain of empirical adequacy (in the sense of \citealt[76]{Gyenis2025b}) of the special science in question. The underlying theory, in this context, does not even need to be a theory of physics: in principle, it could be another, intermediary special science as well.

If we understand supervenience of a special science on an underlying dynamical system in terms of theory-supervenience, then empirical adequacy also becomes the cornerstone against which the assumption of measure-preservation of the underlying dynamical system needs to be assessed. Even the most ardent recent critics of the view that closed systems are fundamental agree that an ``accurate characterization of a system's dynamics can be achieved by modeling it as one subsystem of a larger dynamically isolated composite system'' and that the practice of doing so ``has been highly successful in physics'' \citep[2]{CuffaroHartmann2024}. Since dynamically isolated Hamiltonian systems are measure-preserving, if we can embed a special science phenomenon into a larger system that can be treated as a dynamically isolated Hamiltonian system (up to an approximation under which theory-supervenience of the special science on the underlying theory describing the larger system holds), then, for such composites, the corresponding notion of causal entropy can be introduced and the causal second law can be derived. Many special science phenomena could be studied in isolated laboratory environments, and thus we may obtain the respective causal second law; at worst, for the overwhelming majority of known special science phenomena, we can consider the solar system as a sufficiently isolated composite that contains the phenomena as its subsystem. Choosing an unnecessarily large composite may reduce the chance of practical usefulness (Section \ref{subsect_usefulness}), but, contrary to the main worry of  \citep{CuffaroHartmann2024}, we would still not need to treat the whole universe as a closed system to show that the respective causal second law exists.

\subsection{Portional Causal Regularities}\label{subsect_relaxingrobustness}

We can also relax robustness. Robustness requires almost all physical states instantiating the cause to evolve to physical states instantiating the effect. It is natural to relax the ``almost always'' condition to a ``certain portion'': after all, in some cases a burning match and large amounts of flammable material on the floor of a house do not lead to a burned down house, since the fire is extinguished by firefighters who arrive in time (see the analysis of \citet{Fazekas-Gyenis-Szabo-Kertesz2021} of causal preemption). I thus call a causal regularity {\em portional} if an $\alpha$ portion of the physical states instantiating the cause evolve, in a time characteristic of the regularity, to physical states instantiating the effect. $\alpha$ may be called the {\em efficacy} of the portional cause. (For connecting ``portion of phase volume'' with relative frequency of appearance, see Section \ref{subsect_internalcharacterization} and Remark \ref{remark_portional}.)
 
\begin{figure*}
	\begin{center}
		\includegraphics[width=7.5cm]{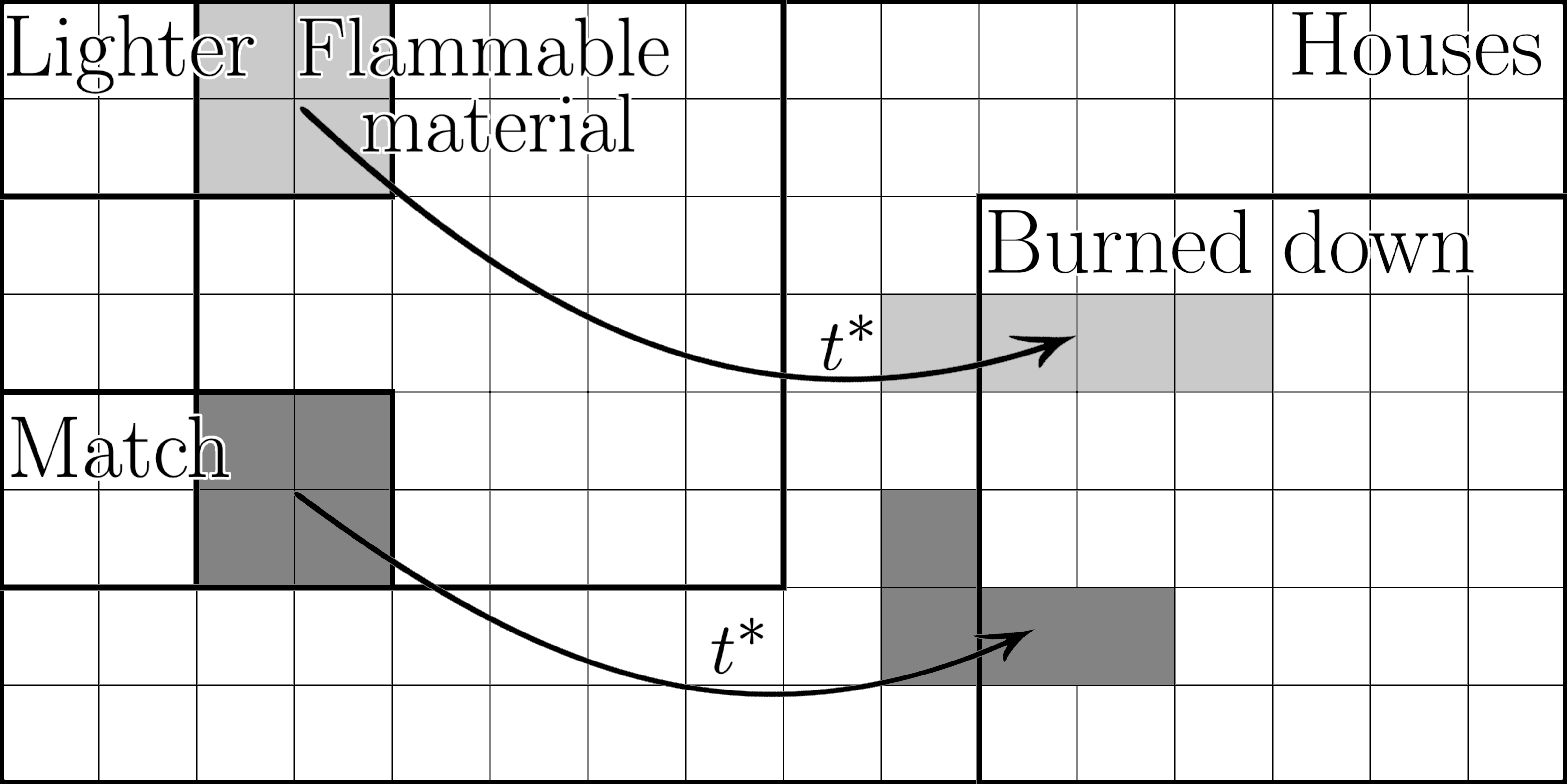}
		\caption{{\em Portional causal regularities with efficacy $\alpha = \frac{3}{4}$ (lit lighter) and $\alpha = \frac{1}{2}$ (burning match).}}
		\label{figure_05}
	\end{center}
\end{figure*}

For a portional causal regularity, measure-preserving dynamics only ensures that the phase volume of the effect is at least as large as an $\alpha$ portion of the phase volume of the cause; thus, in certain cases causal entropy could decrease from cause to effect. Hence, it is important to find conditions that entail that the ``loss'' of a $1-\alpha$ portion of the phase volume is compensated for by other factors. This is why it was important to examine circumstances that guarantee an increase in entropy from cause to effect for a robust causal regularity: the same circumstances may also suffice to entail an increase in entropy from cause to effect for portional causal regularities.

Causal entropy increases for an actual portional causal regularity as long as a sufficient number of distinct possible causes exist for the same effect. A relevant sufficient condition, which depends on the efficacy $\alpha$ and on the phase volume of the distinct possible causes, can be formulated precisely (Proposition \ref{prop_arobust_multiple}). In general, multiplicity of possible causes for the same effect is ubiquitous in the special sciences, and it is rare for the special sciences to formulate causal claims whose efficacy does not meet a reasonable threshold. Hence, it is reasonable to maintain that causal entropy typically increases for portional causal regularities. 

To illustrate, suppose we have two distinct possible causes leading to the same effect with roughly the same phase volume. For instance, suppose that the descriptions `there is a sole burning match in the presence of flammable material in a house' (cause 1) and `there is a sole lit lighter in the presence of flammable material in a house' (cause 2) have the same phase volume. Then, if the two possible causes each lead to the same `burned-down house' effect for at least an average $\alpha = 51\%$ of their instantiating cases, this already guarantees an increase in causal entropy during either of the two particular causal processes (e.g., when an actual lit lighter leads to the house burning down) (see Figure \ref{figure_05} for an illustration).

\section{The Dynamical Systems Approach}\label{sect_dynamicalsystemsapproach}

The set of descriptions of a given special science that satisfies state-supervenience and its underlying physical dynamics jointly determine the pairs of descriptions which stand in a portional causal regularity relationship. What I have called here, for brevity, a portional cause is referred to as a {\em projective state} in \citet{Fazekas-Gyenis-Szabo-Kertesz2021}. However, projective states are only the starting point of their conceptual analysis that aims to show that the dynamical systems approach is both sufficiently flexible to allow for a reconstruction of various kinds of causal claims of the special sciences and everyday life, and that it provides a reductive analysis of these causal claims by outlining truth conditions formulated in terms of a set of descriptions and a physical theory. Their analysis invokes many more conceptual steps to arrive at an account of what {\em ``the'' cause} is (i.e., the so-called causally relevant causal factor that typically separates a so-called principal projective state from a non-projective state along the backwards trajectory-bundle of the most specific special science description that is part of the actual effect). Thus, while descriptions that stand in a portional causal regularity relationship are at the heart of the dynamical systems approach, selecting among several causally relevant factors ``the'' cause, addressing the problem of overdetermination, accounting for causation by absences and misconnections, introducing difference-making intuitions, addressing typical causes, and so on---that is, taking the causal claims of the special sciences and everyday life seriously---requires a more nuanced analysis. \citet{Fazekas-Gyenis-Szabo-Kertesz2021} provide such an informal analysis independently of measure-preservation and without anticipation of an entropy principle. While this paper is self-contained, it can also be understood as a contribution to their approach that formalizes and generalizes some of their core ideas and shows that measure-preservation entails an additional consequence, the causal second law. 

The following sections discuss important areas in which the dynamical systems approach may need further development.

\subsection{Entropy From Cause to Effect vs. Entropy in Time}\label{subsect_fourcausalasymmetries}

Broadly speaking, the dynamical systems approach identifies causation as a portional determination relation between certain descriptions of states of affairs, but, in its current state of development, lacks a principled distinction between determination from the past to the future and from the future to the past. This may be an advantage in the sense that it allows for an exploration of the logical relationship between various causal asymmetries (see below); however, nothing prevents later developments from adopting additional, well-motivated, time-asymmetric requirements for what constitutes a ``cause.'' (I emphasize that this section does not assume that a time reversal operator exists, and is thus conceptually independent of the issue of time reversal, discussed in Section \ref{subsect_tri}.)

With these remarks in mind, let me address how the dynamical causal second law bears on the following four causal asymmetries:
\begin{itemize}
	\item[(1)] The cause precedes its effect in time.
	\item[(2)] The cause leads to its effect.
	\item[(3)] The causal entropy cannot decrease in time.
	\item[(4)] The causal entropy cannot decrease from cause to effect.
\end{itemize}
I claim that, without assuming measure preservation, these four asymmetries are logically independent. Once we assume measure preservation, however, the landscape changes. If we adopt (2) as a conceptual criterion for which member of a timelike determination relationship counts as the cause, then (4) follows from measure preservation, whereas (1) and (3) do not. In fact, for nontrivial cases, (1) holds if and only if (3) does. Thus, the dynamical causal second law yields an asymmetry between cause and effect, but it does not by itself yield a causal time arrow.

As an initial step, I first need to clarify the sense in which (1) can be denied. 

Suppose that two states of affairs of a special science stand in the following relationship: all physical states instantiating the second state of affairs arrive, in a characteristic time, from the set of physical states instantiating the first state of affairs, but many other physical states instantiating the first state of affairs do not evolve to physical states instantiating the second. For such a relationship between the two states of affairs, the first state of affairs occurs {\em earlier} in time; however, while the second state of affairs (together with the physical dynamics) {\em determines} the first state of affairs, the first does not determine the second. Which of the two states of affairs should rather be called ``cause''?

In the context of the dynamical systems approach to causation, it is more natural to identify the second state of affairs as the cause. The approach relies on two key components to define the causal relation: (i) state-supervenience of special science descriptions on sets of physical states, and (ii) the existence of a regular connection between two states of affairs. The choice of which state of affairs should be identified as the cause does not depend on the first key component. However, based on the second key component, it is natural to identify the cause as the state of affairs that determines the other, thus ensuring that causes lead to (determine) their effects. As a consequence of identifying the cause as the state of affairs that determines the effect, the account may allow effects to precede their causes in time, and thus allow (1) to be denied.

\begin{figure*}
	\begin{center}
		\includegraphics[width=7.5cm]{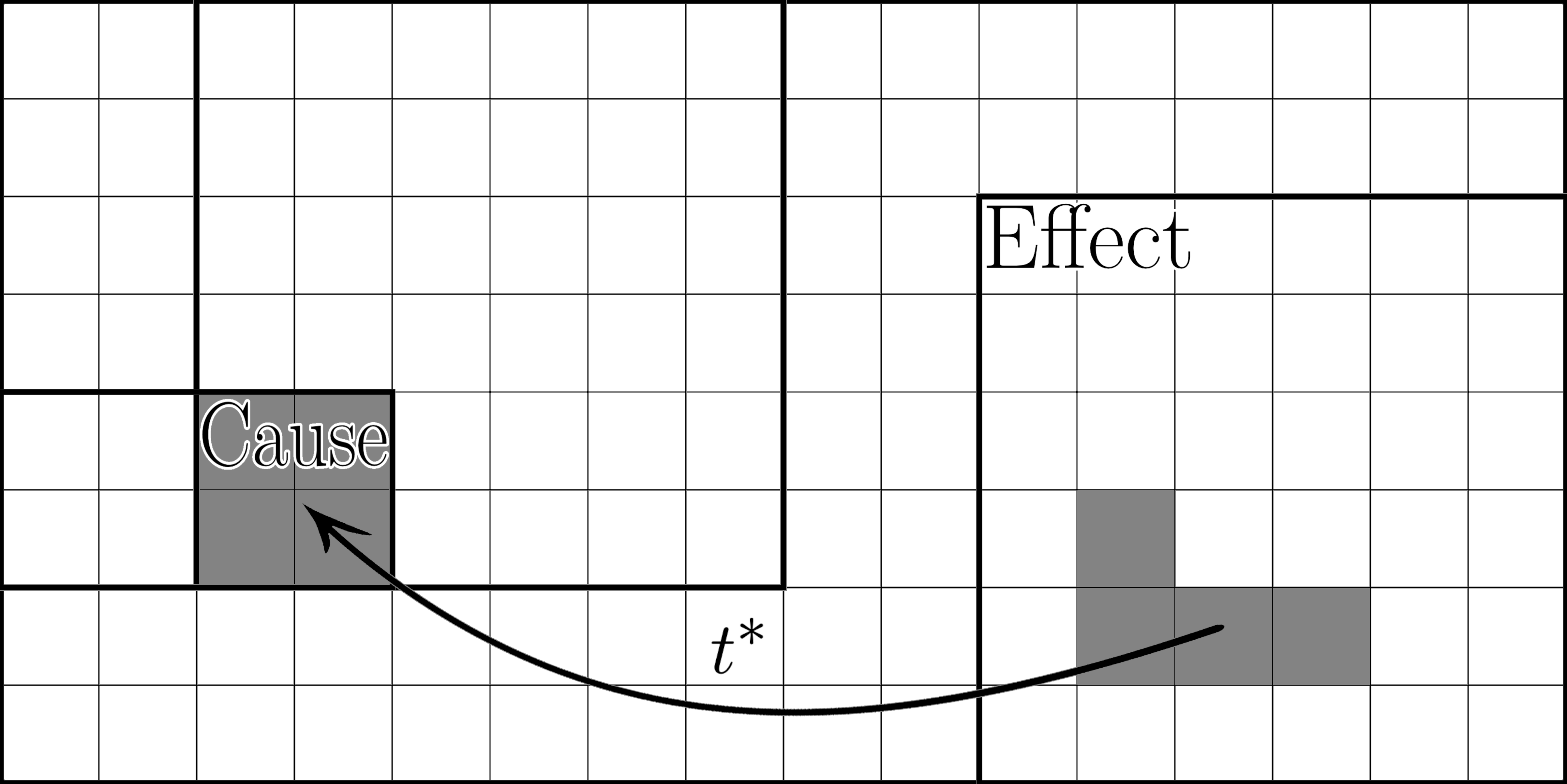}
		\caption{{\em Robust retro-causal regularity: almost all physical states instantiating the cause arrived, in a characteristic time, from the set of physical states instantiating the effect.}}
		\label{figure_06}
	\end{center}
\end{figure*}

I will thus say that the {\em retro-causal regularity} from a cause to an effect is {\em robust} if the backward time evolution takes almost all physical states instantiating the cause into the set of physical states instantiating the effect in a characteristic time (Figure \ref{figure_06}). A {\em robust timelike regularity} is a robust regularity that is causal or retro-causal. Finally, to exclude {\em trivial} cases, I assume that the robust regularity is neither {\em vague} (the effect is not essentially the same as the entire phase space) nor {\em trivially retro-causal} (the causal entropies of the cause and the effect are not the same, since in this case the robust retro-causal regularity would simply be a robust causal regularity with the labeling of the two states of affairs switched) (Definition \ref{def_retrorobust}).

It is easy to show that, if the underlying physical dynamics is measure-preserving, then the entropy of the cause of a robust retro-causal regularity cannot be larger than the entropy of its effect. Thus, combining results, the {\em (dynamical) causal second law} for robust {\em timelike} regularities states that the causal entropy cannot decrease from cause to effect (Proposition \ref{prop_extendedcausalsecondlaw}).

However, unlike the case of a robust causal regularity---where the cause precedes the effect in time and thus the causal entropy cannot decrease in time---in the case of a robust retro-causal regularity, the effect precedes the cause in time, and hence the causal entropy may decrease in time! 

This inference can also be reversed. Suppose measure preservation, and suppose that, for all nontrivial robust timelike regularities expressible by the special science, the causal entropy cannot decrease in time. It then follows that, for all nontrivial timelike regularities expressible by the special science, the cause must precede the effect.

In this sense, under the assumption of measure preservation and nontriviality, if causes precede their effects in time, then the causal entropy cannot decrease in time; conversely, if the causal entropy cannot decrease in time, then causes precede their effects in time.

Thus, the causal second law for robust timelike regularities does not, by itself, guarantee that causal entropy can never decrease in time. Any claim that entropy cannot decrease in time requires the additional, substantive postulate that causes always precede their effects.

If the underlying physical dynamics were not measure-preserving, then the entropy of the cause could be greater than the entropy of the effect for both robust causal and retro-causal regularities. Hence, causal entropy could decrease in time for a robust causal regularity, and it could increase in time for a robust retro-causal regularity. This shows the crucial role that measure preservation plays in connecting the four causal asymmetries.

Finally, (2) can also be relaxed to portionality; and, in fact, robustness also only requires causes to {\em almost} always lead to their effects. As a result, in the current state of development of the dynamical systems approach, the four causal asymmetries are logically independent. 

Although retro-causation receives attention (see \citealt{Faye2021}), the existing philosophical literature does not recognize (3) and (4) as general asymmetries of causal relationships that can be formulated for any special science for which state-supervenience holds, beyond thermodynamics and statistical mechanics. For the same reason, the difference between the third and fourth causal asymmetries has not yet been discussed.

\subsection{Description-Relativity of Causal Entropy}\label{subsect_howspecialsciencedescriptionscomeabout}

I have not provided any {\em a priori} reason why causal regularities exist, or why causes should always precede their effects in time. Whether there exist states of affairs, expressible by a special science, that stand in the relation of a robust causal regularity, or stand in the relation of a robust retro-causal regularity, is determined once the special science descriptions and the underlying dynamical system are fixed. One can easily construct toy examples for which there are nontrivial robust causal regularities but no nontrivial robust retro-causal regularities; other examples for which there are nontrivial robust retro-causal regularities but no nontrivial robust causal regularities; a mix of both; or neither. 

This article leaves open the question how special science descriptions come about; for first steps toward understanding the analogous question for thermodynamics, see \citealt{Hemmo-Shenker2012, Gomori-Gyenis-Szabo2017}. However, I would like to briefly indicate how emphasizing description-relativity of causal regularities and causal entropy may shed light on conceptual issues within thermodynamics as well. 

To start, let me disambiguate three different meanings of ``thermodynamics.'' The set of descriptions of thermodynamics depends on whether by ``thermodynamics'' we mean a theory that (1) specifies a list of thermodynamics {\em quantities} and their relationships, or (2) in addition, also specifies a list of thermodynamic {\em operations} (resulting in changes of constraints), or (3) in addition, also {\em interprets} the thermodynamic operations as physical processes. 

Jaynes focuses his attention on ``thermodynamics'' of case (1), and hence, for his analysis, descriptions of thermodynamics correspond to macrostates: level sets of the chosen thermodynamic quantities obtained for values permitted by their equations of state. Indeed, a gas which, in its initial state, happens to be entirely located in the left half of an isolated box, is a robust cause of a spread-out gas. However, such initial state can neither be properly understood as an equilibrium macrostate, nor is the analysis illuminating from a causal perspective: for instance, we cannot identify which causal factor was ``the'' cause of the gas spreading out, since we did not include the removal of the wall in our set of descriptions. 

While, as this example illustrates, the dynamical systems approach can be applied to ``thermodynamics'' of case (1), it is more illuminating to apply it to case (3),\footnote{For ``thermodynamics'' of case (2), while it is mathematically convenient to treat a difference of thermodynamic constraints of two thermodynamic systems by invoking {\em different} dynamical systems on which the two thermodynamic systems supervene, to properly understand a thermodynamic operation resulting in a {\em change} of constraints, one needs to invoke a {\em single} dynamical system that dynamically implements the operation. But then, we are already at case (3).} especially when the described process can only be treated as a subsystem of a larger, dynamically isolated composite (when the process cannot be treated as dynamically isolated itself, unlike Jaynes' box of gas after the removal of the wall). In case (3), descriptions of thermodynamics correspond to certain subsets of a region of the phase space of a dynamical system that describes the composite in which thermodynamic operations are implemented by physical processes. For instance, for {\em 19th-century thermodynamics} (understood here as the case (3) special science whose thermodynamic quantities stand in the mathematical relations described by classical thermodynamics, but whose descriptions also include those of the usual thermodynamical operations relating to boxes of gases) the underlying dynamical theory may describe an isolated room with a box of gas and a device, and change of volume is implemented by the device removing the wall separating two halves of the box. Thus, descriptions of 19th-century thermodynamics may include, for instance, $C_1$ = `the gas is confined to the left half of the box,' $C_2$ = `the separating wall is removed,' and $E$ = `the gas is spread out to the whole box.' The underlying dynamics entails that the phase space region corresponding to $C_1 \wedge \neg C_2$ is not a robust cause state of $E$, but $C_1 \wedge C_2$ is a robust cause state of $E$; hence, the causal entropy cannot decrease from $C_1 \wedge C_2$ to $E$. Then, the difference making analysis outlined by \citet{Fazekas-Gyenis-Szabo-Kertesz2021} further entails that the causal factor $C_2$ is ``the'' cause of $E$, in agreement with our everyday intuitions about causation. (Statistical mechanics, by itself, provides no such result, since the physical theory lacks a conceptual analysis of which causal factors among those that define a cause state should be identified as ``the'' cause.)

In the closing section of his paper, \citet{Jaynes1965} emphasized the relativity of thermodynamic entropy to a chosen set of thermodynamic quantities. Description-relativity of causal entropy is a general way to express this emphasis: in case (1), description-relativity is equivalent with relativity to a chosen set of thermodynamic quantities, and, in case (3), may also include relativity to a chosen set of thermodynamic operations and to their physical implementation. Thus, in the terminology of this paper, 19th-century thermodynamics is strictly speaking a different special science than thermodynamics applied to black body radiation, degenerate Fermi gases, spin temperature in nuclear-spin systems, or black holes.\footnote{Or, {\em nota bene}, than a thermodynamics whose descriptions additionally include a hypothetically robustly operating Maxwell demon. However, this well-known paradox shall be addressed in detail elsewhere.} What connects these different applications is that their starting point is a state counting notion of entropy (viz. causal entropy), and they all define, for instance, the formal concept ``temperature'' as the inverse of the partial derivative of this entropy with respect to energy, holding other extensive quantities fixed. However, in the different applications the same formal concept ``temperature'' has different empirical interpretations, and the extensive quantities which need to be kept fixed are also differently described: `the wall is not moving and the amount of gas is unchanging' is clearly not the same description as `the angular momentum and the electric charge of the Kerr--Newman black hole are unchanging.'\footnote{Invoking the terminology introduced by \citep{Gyenis2025b}, we may say that these different applications of thermodynamics have the same {\em formal structure} but they have different {\em empirical structures}, and that the empirical structure of all of current thermodynamics is the common refinement of the empirical structures of its currently known applications. Thus thermodynamics, understood as an {\em empirical theory}, changes whenever a new application of it is discovered.}

\subsection{Thermodynamics and Causes Preceding Their Effects}\label{subsect_rovelli}

If we do not have a principled motivation to require a cause to always precede its effect, we may say that, for a given special science, the causal time arrow is {\em real} if it has only nontrivial causal regularities. If nontrivial retro-causal regularities exist, but for some reason cannot be found by practitioners of a special science, then a causal time arrow is only {\em apparent}. As we have seen, the causal second law, by itself, entails neither a real nor an apparent causal time arrow.

If we take the phenomenological second law of 19th-century thermodynamics to be a consequence of the causal second law (in the sense explained in Section \ref{subsect_jaynes}), then, as such it entails neither a real nor an apparent thermodynamic time arrow.  Of course, many versions of the second law have been formulated since the inception of thermodynamics. \citet{Uffink2001}, after an extensive review of the history of different formulations of the second law, concludes that it has nothing to do with the arrow of time; many others (recently: \citealt{Roberts2022}) share Uffink's sentiment. Thus, in this respect, my diagnosis of the causal second law aligns with the conclusions drawn from detailed historical reviews.

This does not mean that a thermodynamic time arrow cannot be established on the basis of independent grounds; it only means that one must be careful in interpreting the increase of entropy from `initial' thermodynamic states to `final' ones as evidence that entropy always increases in time. It would thus be important to understand whether thermodynamics has a real time arrow or, at best, an apparent one.

Many authors (recently: \citealt{Rovelli2023}) have argued that the temporal asymmetry of cause and effect can be derived from the increase of entropy in time, which in turn they assume is provided by thermodynamics. In light of the previous discussion, it is unclear how convincing these arguments can be without a prior, thorough understanding of the source of the assumed thermodynamic {\em time} asymmetry.\footnote{The four causal asymmetries discussed in Section \ref{subsect_fourcausalasymmetries} relate to a fifth asymmetry that is not discussed here, the directionality of the time parameter $t$ of the underlying physical theory. In the philosophy of time literature, several attempts have been made to divest the problem of the directionality of $t$ from causal considerations (for an overview, see \citealt{Callender2017}). My analysis implicitly assumes that the direction in which special science clocks measure time is coordinated with the increasing values of $t$. It would be interesting to see whether one can relax this implicit assumption, but this should be a subject of future work.} If the thermodynamic time arrow is only apparent, it may be the case that even though we have not yet found retro-causal thermodynamic regularities, we may find them in the future. Even if the thermodynamic time arrow is real, it may be the case that, even though retro-causal regularities do not exist on the basis of how thermodynamic properties carve up the phase space, descriptions of another special science carve up the same phase space in a different way such that, for them, retro-causal regularities do exist (and hence, for these special sciences, some effects do precede their causes).

\subsection{The Existence of Special Science Time Arrows}\label{subsect_thermoreduction}

Although, by itself, the causal second law does not provide a causal time arrow, this does not mean that the special sciences cannot motivate, on independent grounds, that their causes always precede their effects. If the special sciences can assume that their causes precede their effects, then the causal second law also entails that causal entropy cannot decrease in time for robust regularities.

Many philosophers (for an overview, see \citealt{Callender2021}) have attempted to reduce the psychological, computational, biological, chemical, and other time arrows to the thermodynamic one, chiefly because they believe that the increase of entropy in time---assumed to be provided by thermodynamics---yields the appropriate type of time asymmetry. However, if a special science has its own causal entropy that increases in time, then such a reduction to thermodynamics is no longer necessary to infer the existence of a special science time arrow, for the sought-after type of entropic asymmetry is already implied by the very existence of a regular connection from the causes to the effects of the special science in question.

\subsection{Transition-Relative-Frequency-Based Characterization of Causal Regularities and the Occurrent Causal Second Law}\label{subsect_internalcharacterization}

Theory-supervenience of special sciences on physics (Section \ref{subsect_relaxingsupervenience}) goes beyond {\em state}-supervenience: it requires differences of special science descriptions of a pair of {\em space-time regions} to be recovered as differences of physics descriptions of the same pair of space-time regions. In general, a description of a space-time region by a theory consists of an assignment of state descriptions to certain moments of time and of the deterministic or probabilistic consequences of these state descriptions that are further entailed by the theory. For a physical theory that is cast as a dynamical system, it is sufficient to assign a single physical state to a moment of time, since the entire trajectory of physical states is further entailed by the time evolution operator. For the special sciences, the further consequences of an assignment of special science descriptions to certain moments of time are, in general, probabilistic, and often can be represented by transition relative frequencies. 

Thus, if we assume that a special science theory-supervenes on a dynamical system, then the dynamical system must also recover the transition relative frequencies entailed by the special science descriptions. In the context of thermodynamics, \citet{Hemmo-Shenker2012} call this latter requirement the ``Probability Rule.''\footnote{It is widely assumed in physics that the symplectic phase-volume measure (which, in canonical local coordinates, is just the Lebesgue measure, which in turn is a generalization of the uniform probability distribution) of a phase space region empirically satisfies this property for fixed levels of energy when no hidden confounding correlations exist during preparation. Finding a philosophically satisfying {\em justification} of this empirical fact is a well-known problem in the foundations of physics; for an excellent analysis, see \citep{Hemmo-Shenker2012}. Future research should address the relationship of this assumption to the question how special science descriptions come about (Section \ref{subsect_howspecialsciencedescriptionscomeabout}).}

By strengthening state-supervenience to {\em history-supervenience} with this additional requirement (Definition \ref{def_supervenes}), we can give an alternative characterization of a robust causal regularity: a causal regularity is robust when the cause leads to the effect with transition relative frequency $1$ in a time characteristic of the regularity (Definitions \ref{def_transitionprobabilitystructure} and \ref{def_robustregularitydoublearrow}, Proposition \ref{prop_connectrobustness}; for portional causal regularities, see Remark \ref{remark_portional}). Since, ideally, transition relative frequencies between special science descriptions of states of affairs may be observed by the special sciences, a transition-relative-frequency-based characterization of a  robust causal regularity is special-science-internal in the sense that it does not explicitly invoke the underlying physical dynamics (Figure \ref{figure_07}).

\begin{figure*}
	\begin{center}
		\includegraphics[width=7.5cm]{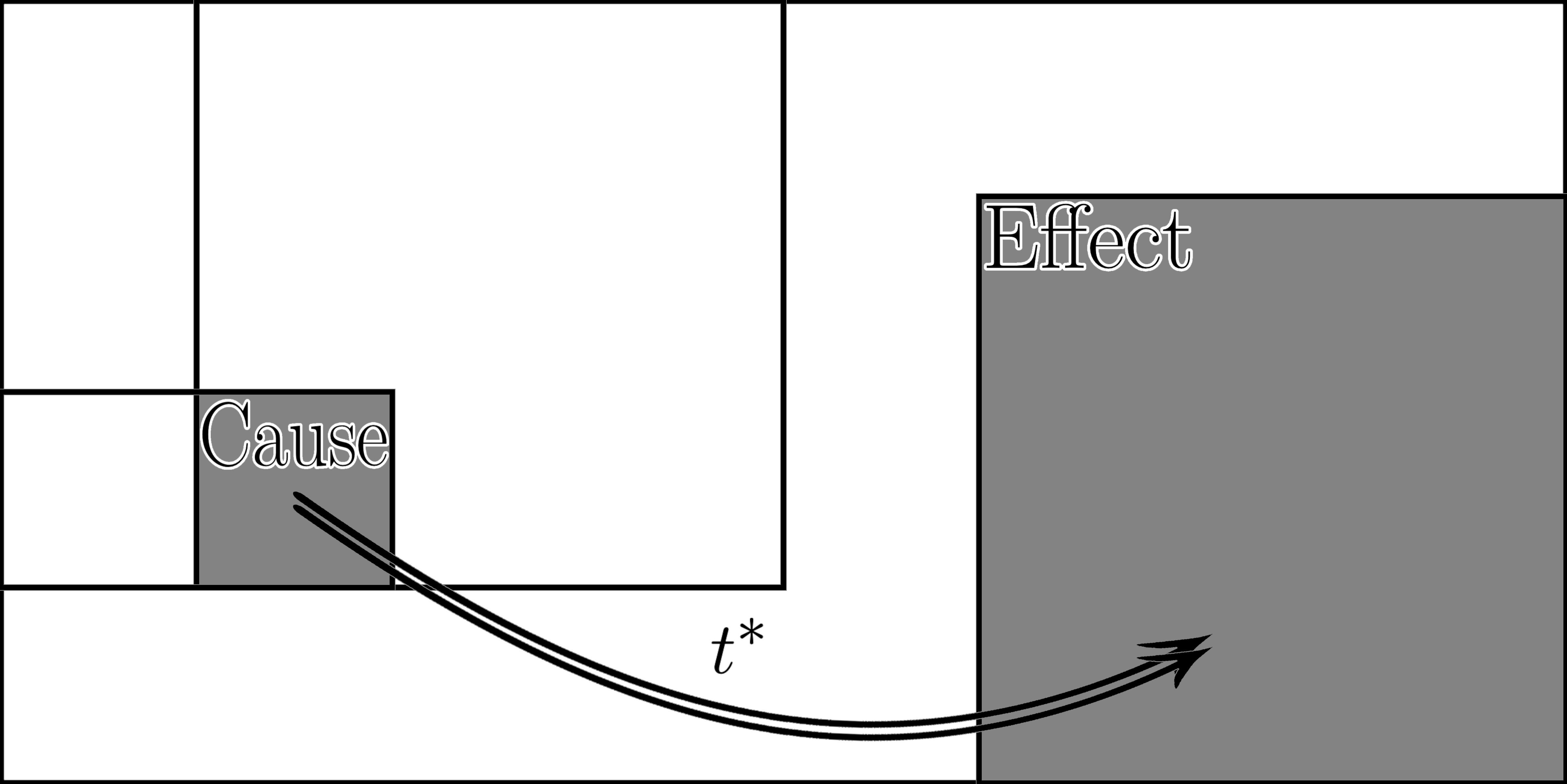}
		\caption{{\em A robust causal regularity, defined with respect to transition relative frequencies, connects special science descriptions, and hence can be illustrated without depicting the physical states corresponding to the cause and the effect.}}
		\label{figure_07}
	\end{center}
\end{figure*}

History-supervenience and measure-preservation entail (Proposition \ref{prop_causalsecondlaw2}):
\begin{description}
	\item The {\bf occurrent causal second law} (for robust causal regularities): robust causes cannot occur more frequently than their effects.
\end{description}

The occurrent causal second law may seem to have an obvious explanation: if the same cause always leads to the same effect, the effect has to occur at least as many times as the cause, hence the frequency of the effect cannot be less than that of the cause! 

However, a puzzle arises about this explanation. We want to contrast how frequently cause states and their effect states occur, but when we count them, we always count cause states that occur earlier than their effect states. Can we simply compare counts taken at different times? Closer inspection reveals that the explanation is only plausible when it is understood as comparing the frequency of {\em histories} instead of the frequency of {\em states}. The number of those histories that end in the effect state indeed cannot be less than the number of histories that end in the effect and start from the cause state, since the latter set of histories is a subset of the former. However, we cannot infer anything further regarding the frequency of cause and effect states without making substantive assumptions about how histories between cause and effect states behave. If histories starting from different instances of the cause state merged before the effect state occurs, then our count would include more instances of cause states than effect states, hence the frequency of the former would be greater than that of the latter!

Thus, the seemingly obvious explanation is misleading: to {\em explain} the occurrent causal second law (as opposed to simply {\em empirically confirm} it), more needs to be assumed. In my analysis, measure-preservation of the underlying physical dynamics is what permits comparing the frequency of occurrences of a robust cause state and its effect, despite that they occur at different times.

Once we characterized the notion of a robust causal regularity with respect to transition relative frequencies, we may turn the inference around: history-supervenience allows for a reductive analysis of a robust causal regularity that is characterized with respect to transition relative frequencies in terms of a robust causal regularity that is defined with respect to a dynamical system (Proposition \ref{prop_connectrobustness}). Illustratively, the double arrow, used in Figure \ref{figure_07} to represent that a cause leads to its effect with relative frequency one, can be replaced by a single arrow, as in Figure \ref{figure_02}, which represents the time evolution that brings almost all physical states instantiating the cause to those instantiating the effect. 

Since it invokes weaker assumptions, the dynamical systems-based characterization of robust causal regularities and the causal second law provide the more general approach (Remark \ref{remark_getridof}), which motivates why this paper focuses primarily on the dynamical systems-based characterization.

\subsection{The Practical Usefulness of the Causal Second Law}\label{subsect_usefulness}

The symbiosis between the discovery of thermodynamics and the development of heat engines is well-known. 19th-century thermodynamics successfully describes and predicts crucial aspects of the behavior of heat engines, and its ability to do so depends only on macroscopic, observable, ``phenomenological'' quantities, without needing to invoke microscopic reduction. However, the success of 19th-century thermodynamics in describing crucial aspects of the behavior of heat engines in a practically useful way leaves open the question of under what conditions this success can be generalized to phenomena other than heat engines.

A heat engine operates as a sequence in which the expansion of a box of gas follows heating, contraction follows cooling, and these states alternate at regular, sharply defined intervals; in our terminology, the successive states of a heat engine form a chain of (nearly) robust cause states and their effects. As Jaynes' reversed argument shows, one can arrive at the practically immensely useful phenomenological second law starting out from this very premise. 

The first step of Jaynes' reversed argument is the claim that reproducible processes entail an abstract second law of statistical mechanics. This paper shows that the first step of Jaynes' reversed argument can be generalized to any special science for which state-supervenience and measure-preservation hold: their causal entropy cannot decrease from a robust cause to its effect. However, the causal second law is only a first step toward practically useful expressions. While my reconstruction suggests that finding (nearly) robust causal regularities is of central importance, an investigation of the conditions under which a reasoning analogous to further steps of Jaynes' reversed argument can also be implemented by particular special sciences---whether some quantities defined from causal entropy correspond to certain special science observables, whether they do so in a way that a subset of these observables can be used to predict some of the remaining ones, or whether they even allow for a full recovery of the formalism of classical thermodynamics---shall be addressed by future research.

\subsection{Closing Remarks}\label{subsect_closingremarks}

The dynamical systems approach is not a regularity account of causation in the sense in which the term is usually understood in the literature: although the approach takes the existence of a regular connection from a cause to an effect to be an important property of the causal relation, it invokes supervenience to anchor this regular connection to physics. The dynamical systems approach differs from inferential accounts, since it does not identify causal claims as inferences made by agents, and since even a robust causal regularity requires only that the cause {\em almost} always lead to the effect, and the notion of a portional regularity significantly relaxes this condition. The dynamical systems approach is not a counterfactual account in the sense the term is usually understood (since it does not formulate the meaning or truth condition of causal claims in terms of counterfactuals).\footnote{The way the dynamical systems approach relativizes causality to different levels of descriptive granularity is similar to \citep{MenziesList2010}, but the latter is a proper counterfactual account on all levels of causation that invokes the Lewisian conception of closest accessible possible worlds to analyze counterfactuals. The dynamical systems approach does not invoke closest accessible possible worlds, yet is able to characterize difference making intuitions (see again \citealt{Fazekas-Gyenis-Szabo-Kertesz2021}). I thank an anonymous referee for calling my attention to \citep{MenziesList2010}.} The approach also makes no reference to dispositions, mechanisms, manipulation, or agency. Although the approach naturally incorporates familiar probabilistic requirements (e.g., by requiring the cause of a portional regularity to increase the relative frequency of its effect), it is not what is usually understood as a probabilistic account, as it does not define the causal relation merely in terms of probabilistic constraints on an algebra or graph. The dynamical systems approach is best understood as a novel, physicalist regularity analysis of causal claims that is substantially different from other physics-based proposals (such as the causal mark transmission and the conserved quantity accounts, see \citealt{Frisch2020}) but is also capable of characterizing the sense in which causation can be regarded as a folk science \citep{Norton2003}.

A better understanding of the aforementioned issues within the dynamical systems approach to causation should be the focus of future research. Analyzing causal claims in terms of dynamical systems is intended as an {\em approach}, rather than as a full-fledged, finalized {\em account}. Future work should provide a more thorough justification of its key assumptions, and relax many simplifications of the approach that were primarily imposed for accessibility. These include a simplified representation of the descriptive capabilities of the special sciences; binary ways of thinking about the existence of causal connections between states of affairs and about the supervenience relation; reliance on a single measure instead of a family of conditional measures (that would be required to properly characterize a cause as a triggering event); the assumption of the existence of a characteristic time, and so on. 

The dynamical systems approach is a new way of thinking about the reductive relationship between the causal claims of the special sciences and physics. I urge readers to contribute to the open problems outlined above and to explore applications of the dynamical systems approach to other philosophical issues.

\begingroup
\small

\section*{Acknowledgements}

I thank Gy\H{o}z\H{o} Egri, M\'arton G\"om\"ori, G\'abor Hofer-Szab\'o, Szabolcs M\'esz\'aros, and two anonymous referees for helpful comments. I also thank audiences of talks for helpful discussion, including those I gave about the dynamical systems approach (starting with an invited talk at the LSE in early 2014). This research has been supported by the OTKA K-134275 and ADVANCED-152165 grants.


\bibliographystyle{apacite}

\begin{thebibliography}{}

\bibitem[\protect\citename{Andreas and Guenther, }2021]{Andreas-Guenther2021}
Andreas, H., and M. Guenther. 
\newblock 2021.
\newblock ``Regularity and Inferential Theories of Causation.''
\newblock In {\em The Stanford Encyclopedia of Philosophy}. Fall 2021, edited by E. N. Zalta. 
\newblock Stanford University Press. 

\bibitem[\protect\citename{Callender, }2017]{Callender2017}
Callender, C. 
\newblock 2017.
\newblock {\em What Makes Time Special?}
\newblock Oxford University Press.
\newblock \url{https://doi.org/10.1093/oso/9780198797302.001.0001} .

\bibitem[\protect\citename{Callender, }2021]{Callender2021}
Callender, C. 
\newblock 2021.
\newblock ``Thermodynamic Asymmetry in Time.''
\newblock In {\em The Stanford Encyclopedia of Philosophy}. Fall 2024, edited by E. N. Zalta. 
\newblock Stanford University Press. 

\bibitem[\protect\citename{Cuffaro and Hartmann, }2024]{CuffaroHartmann2024}
Cuffaro, M. E., and S. Hartmann. 
\newblock 2024.
\newblock ``The Open Systems View.''
\newblock {\em Philosophy of Physics} 2: 1--27.
\newblock \url{https://doi.org/10.31389/pop.90} .

\bibitem[\protect\citename{Faye, }2021]{Faye2021}
Faye, J. 
\newblock 2021.
\newblock ``Backward Causation.''
\newblock In {\em The Stanford Encyclopedia of Philosophy}. Summer 2024, edited by E. N. Zalta. 
\newblock Stanford University Press. 

\bibitem[\protect\citename{Fazekas et~al., }2021]{Fazekas-Gyenis-Szabo-Kertesz2021}
Fazekas, P., B. Gyenis, G. Hofer-Szab\'o, and G. Kert\'esz. 
\newblock 2021.
\newblock ``A Dynamical Systems Approach to Causality.''
\newblock {\em Synthese} 198: 6065--6087.
\newblock \url{https://doi.org/10.1007/s11229-019-02451-y} .

\bibitem[\protect\citename{Frigg, }2008]{Frigg2008}
Frigg, R. 
\newblock 2008.
\newblock ``A Field Guide to Recent Work on the Foundations of Statistical Mechanics.''
\newblock In {\em The Ashgate Companion to Contemporary Philosophy of Physics}, edited by D. Rickles,  99--196.
\newblock Ashgate.
\newblock \url{https://doi.org/10.48550/arXiv.0804.0399} .

\bibitem[\protect\citename{Frisch, }2020]{Frisch2020}
Frisch, M. 
\newblock 2020.
\newblock ``Causation in Physics.''
\newblock In {\em The Stanford Encyclopedia of Philosophy}. Winter 2023, edited by E. N. Zalta. 
\newblock Stanford University Press. 

\bibitem[\protect\citename{G\"om\"ori et~al.,
  }2017]{Gomori-Gyenis-Szabo2017}
G\"om\"ori, M., B. Gyenis, and G. Hofer-Szab\'o. 
\newblock 2017.
\newblock ``How Do Macrostates Come About?''
\newblock In {\em Making it Formally Explicit}, edited by G. Hofer-Szab\'o and L. Wro\'nski, 213--229.
\newblock Springer.
\newblock \url{https://doi.org/10.1007/978-3-319-55486-0} .

\bibitem[\protect\citename{Gyenis, }2025]{Gyenis2025b}
Gyenis, B. 
\newblock 2025.
\newblock ``Empirical Structure Physicalism and Realism, Hempel's Dilemma, and an Optimistic Meta-Induction.''
\newblock {\em Synthese} 206: 76.
\newblock \url{https://doi.org/10.1007/s11229-025-05160-x} .

\bibitem[\protect\citename{Hemmo and Shenker, }2012]{Hemmo-Shenker2012}
Hemmo, M. and O. R. Shenker.
\newblock 2012.
\newblock {\em The Road to Maxwell's Demon}.
\newblock Cambridge University Press.
\newblock \url{https://doi.org/10.1017/CBO9781139095167} .

\bibitem[\protect\citename{Hume, }1739]{Hume1739}
Hume, D. 
\newblock 1739.
\newblock A Treatise of Human Nature.
\newblock \url{https://www.davidhume.org/texts/t/full} .

\bibitem[\protect\citename{Jaynes, }1965]{Jaynes1965}
Jaynes, E. T. 
\newblock 1965.
\newblock ``Gibbs vs Boltzmann entropies.''
\newblock {\em American Journal of Physics} 33: 391--398.
\newblock \url{https://doi.org/10.1119/1.1971557} .

\bibitem[\protect\citename{Loewer, }2007]{Loewer2007}
Loewer, B. 
\newblock 2007.
\newblock ``Counterfactuals and the Second Law.''
\newblock In {\em Causation, Physics, and the Constitution of Reality: Russell's Republic Revisited}, edited by H. Price and R. Corry, 293--326.
\newblock Clarendon Press.

\bibitem[\protect\citename{Menzies and List, }2010]{MenziesList2010}
Menzies, P., and C. List. 
\newblock 2010.
\newblock ``The Causal Autonomy of the Special Sciences.''
\newblock In {\em Emergence in Mind}, edited by C. Macdonald and G. Macdonald.
\newblock Oxford University Press.
\newblock \url{https://doi.org/10.1093/acprof:oso/9780199583621.003.0008} .

\bibitem[\protect\citename{Norton, }2003]{Norton2003}
Norton, J. 
\newblock 2003.
\newblock ``Causation as Folk Science.''
\newblock {\em Philosophers' Imprint} 3: 1--22.

\bibitem[\protect\citename{Roberts, }2022]{Roberts2022}
Roberts, B. 
\newblock 2022.
\newblock {\em Reversing the Arrow of Time}.
\newblock Cambridge University Press.
\newblock \url{https://doi.org/10.1017/9781009122139} .

\bibitem[\protect\citename{Rovelli, }2023]{Rovelli2023}
Rovelli, C. 
\newblock 2023.
\newblock ``How Oriented Causation Is Rooted Into Thermodynamics.''
\newblock {\em Philosophy of Physics} 1: 1--14.
\newblock \url{https://doi.org/10.31389/pop.46} .

\bibitem[\protect\citename{Sklar, }1993]{Sklar1993}
Sklar, L. 
\newblock 1993.
\newblock {\em Physics and Chance: Philosophical Issues in the Foundations of Statistical Mechanics}.
\newblock Cambridge University Press.
\newblock \url{https://doi.org/10.1017/CBO9780511624933} .

\bibitem[\protect\citename{Uffink, }2001]{Uffink2001}
Uffink, J. 
\newblock 2001.
\newblock ``Bluff your way in the Second Law of Thermodynamics.''
\newblock {\em Studies in the History and Philosophy of Modern Physics} 32: 305--394.
\newblock \url{https://doi.org/10.1016/S1355-2198(01)00016-8} .


\end{thebibliography}

\appendix

\section*{APPENDIX}\label{sect_appendixA}
\setcounter{section}{1}

\subsection{Propositions for Section \ref{sect_causalsecondlaw_robust}}

\noindent Suppose $\{d_i\}_{i \in I \subseteq \mathbb{N}}$ are the ``natural'' descriptions of states of affairs that may obtain at a moment in time (e.g., possible causes and effects), given the descriptive capabilities of a particular special science. I assume that, by allowing for countably infinite conjunctions and disjunctions of $d_i$'s and of their negations, we can capture the expressive power of the special science regarding states of affairs at moments in time; the resulting $\sigma$-complete Boolean algebra is isomorphic to a ${\cal D}$ $\sigma$-algebra on a set $D$ by Stone's well-known theorem. State-supervenience (see Definition \ref{def_supervenes}) is the assumption that ${\cal D}$ is isomorphic with a $\sigma$-algebra ${\cal R}$ on an $R$ subset of the phase space $P$ of a physical theory that can be cast as a dynamical system.
\begin{Definition}\label{def_dynamicalsystem}
Let $P$ be a set, $U_t$ a deterministic time evolution operator on $P$ (i.e., for all $t \in \mathbb{R}: U_{t} : P \rightarrow P$ is bijective, and $\{ U_t \}_{t \in \mathbb{R}}$ forms a one-parameter abelian group, hence $U_{t}$ is invertible with notation $U_{-t} \doteq (U_{t})^{-1}$ and for all $t, s \in \mathbb{R}: U_t U_s = U_s U_t = U_{t+s}$), $\Sigma$ a $\sigma$-algebra on $P$ that is closed under $U_t$ images for all $t \in \mathbb{R}$, and $m$ a $\sigma$-finite measure on $(P, \Sigma)$. Then $(P, \Sigma, m, U_t)$ is called a {\em dynamical system}. \\
When it is useful to single out a $0\in T \subseteq \mathbb{R}^{\geq 0}$ set of times, an $R \subseteq P$, and an ${\cal R}$ $\sigma$-algebra on $R$ such that ${\cal R} \subseteq \Sigma$, I write the dynamical system as $(P, \Sigma, R, {\cal R}, T, m, U_t)$.
\end{Definition}

\begin{Remark}
When $(P, \Sigma, R, {\cal R}, T, m, U_t)$ is a dynamical system, since ${\cal R} \subseteq \Sigma$, when $C \in {\cal R}$ I write $C \in {\cal R}$ and $C \in \Sigma$ interchangeably. When ${\cal R}$ and ${\cal D}$ are isomorphic as $\sigma$-algebras, I write $C \in {\cal R}$ and $C \in {\cal D}$ interchangeably.
\end{Remark}

\begin{Remark}
It follows from Definition \ref{def_dynamicalsystem} of a dynamical system $(P, \Sigma, m, U_t)$ that: \\
{\bf (L1)} for all $p\in P, A\subseteq P, t\in \mathbb{R}$: $U_t p \in A \leftrightarrow p \in U_{-t} A$, and hence for all $A, B \subseteq P, t\in \mathbb{R}$: $U_t(A \cup B) = U_t A \cup U_t B$ and $U_t(A \cap B) = U_t A \cap U_t B$. \\
Since $\Sigma$ is closed under $U_t$ images: \\
{\bf (L2)} $U_t A \in \Sigma$ for every $A \in \Sigma, t \in \mathbb{R}$.
\end{Remark}

\begin{Remark}
The set of states in a $C \in \Sigma$ that evolve to states in an $E \in \Sigma$ after $t$ units of time pass is naturally written as $\{ p \in C \mid U_{t} p \in E \}$. Due to (L1), $\{ p \in C \mid U_{t} p \in E \} = \{ p \in C \mid p \in U_{-t} E \} = C \cap U_{-t}E$, and since $\Sigma$ is closed under $U_{-t}$, $C \cap U_{-t}E \in \Sigma$.
\end{Remark}

\begin{Definition}\label{def_robust}
Let $(P, \Sigma, m, U_t)$ be a dynamical system, $C, E \in \Sigma, t^* > 0$. I say that the {\em causal regularity from $C$ to $E$ is robust (with characteristic time $t^*$, and with respect to $(P, \Sigma, m, U_t)$)}---in notation, $C \raisebox{-0.2ex}{$\stackrel{\raisebox{-0.4ex}{$\scriptscriptstyle t^*$}}{\scriptstyle\rightsquigarrow}$} E$---if $m(\{ p \in C \mid U_{t^*} p \in E \}) = m(C)$. 
\end{Definition}

\noindent Let $(P, \Sigma, m, U_t)$ be a dynamical system. The following properties of $\raisebox{0.4ex}{$\scriptstyle\rightsquigarrow$}$ will be used later: \\
\noindent {\bf (L3)} $C \raisebox{-0.2ex}{$\stackrel{\raisebox{-0.4ex}{$\scriptscriptstyle t^*$}}{\scriptstyle\rightsquigarrow}$} E ~~\leftrightarrow~~ m(C \cap U_{-t^*}E) = m(C) ~~\leftrightarrow~~ m(C \cap \neg U_{-t^*} E)=0$. \\ 
\noindent {\bf Proof.} As we have seen, $\{ p \in C \mid U_{t^*} p \in E \} = C \cap U_{-t^*}E$ due to (L1), which entails the first equivalence. Noting that $C$ is the disjoint union of $C \cap U_{-t^*}E$ and $C \cap \neg U_{-t^*}E$, the last equivalence follows from $m(C \cap \neg U_{-t^*} E) = m(C) - m(C \cap U_{-t^*} E)$. \qed \\
\noindent {\bf (L4)} If $C \raisebox{-0.2ex}{$\stackrel{\raisebox{-0.4ex}{$\scriptscriptstyle t^*$}}{\scriptstyle\rightsquigarrow}$} E$, then $(C \cap B) \raisebox{-0.2ex}{$\stackrel{\raisebox{-0.4ex}{$\scriptscriptstyle t^*$}}{\scriptstyle\rightsquigarrow}$} E$ and $C \raisebox{-0.2ex}{$\stackrel{\raisebox{-0.4ex}{$\scriptscriptstyle t^*$}}{\scriptstyle\rightsquigarrow}$} (E \cup B)$, for any $B \in \Sigma$. \\
\noindent {\bf Proof.} If $C \raisebox{-0.2ex}{$\stackrel{\raisebox{-0.4ex}{$\scriptscriptstyle t^*$}}{\scriptstyle\rightsquigarrow}$} E$, then by (L3): $m(C \cap \neg U_{-t^*}E) = 0$, thus $m(C \cap B \cap \neg U_{-t^*}E) = 0$, hence $(C \cap B) \raisebox{-0.2ex}{$\stackrel{\raisebox{-0.4ex}{$\scriptscriptstyle t^*$}}{\scriptstyle\rightsquigarrow}$} E$. \\
If $C \raisebox{-0.2ex}{$\stackrel{\raisebox{-0.4ex}{$\scriptscriptstyle t^*$}}{\scriptstyle\rightsquigarrow}$} E$, then, from (L3) and (L1): $m(C \cap \neg U_{-t^*}(E \cup B)) = m(C \cap \neg (U_{-t^*}E \cup U_{-t^*}B)) = m(C \cap \neg U_{-t^*}E \cap \neg U_{-t^*}B) \leq m(C \cap \neg U_{-t^*}E) = 0$, thus, by (L3), $C \raisebox{-0.2ex}{$\stackrel{\raisebox{-0.4ex}{$\scriptscriptstyle t^*$}}{\scriptstyle\rightsquigarrow}$} (E \cup B)$. \qed \\
\noindent {\bf (L5)} $(C_1 \cup C_2) \raisebox{-0.2ex}{$\stackrel{\raisebox{-0.4ex}{$\scriptscriptstyle t^*$}}{\scriptstyle\rightsquigarrow}$} E ~~\leftrightarrow~~ C_1 \raisebox{-0.2ex}{$\stackrel{\raisebox{-0.4ex}{$\scriptscriptstyle t^*$}}{\scriptstyle\rightsquigarrow}$} E$ and $C_2 \raisebox{-0.2ex}{$\stackrel{\raisebox{-0.4ex}{$\scriptscriptstyle t^*$}}{\scriptstyle\rightsquigarrow}$} E$. \\
\noindent {\bf Proof.} For $\rightarrow$: If $(C_1 \cup C_2) \raisebox{-0.2ex}{$\stackrel{\raisebox{-0.4ex}{$\scriptscriptstyle t^*$}}{\scriptstyle\rightsquigarrow}$} E$, then $m((C_1 \cup C_2) \cap \neg U_{-t^*}E)=0$, thus $m(C_1 \cap \neg U_{-t^*}E) + m((C_2 \setminus C_1) \cap \neg U_{-t^*}E) = 0$, and hence $C_1 \raisebox{-0.2ex}{$\stackrel{\raisebox{-0.4ex}{$\scriptscriptstyle t^*$}}{\scriptstyle\rightsquigarrow}$} E$ (I used (L3); mutatis mutandis for $C_2 \raisebox{-0.2ex}{$\stackrel{\raisebox{-0.4ex}{$\scriptscriptstyle t^*$}}{\scriptstyle\rightsquigarrow}$} E$). \\ 
For $\leftarrow$: if $C_2 \raisebox{-0.2ex}{$\stackrel{\raisebox{-0.4ex}{$\scriptscriptstyle t^*$}}{\scriptstyle\rightsquigarrow}$} E$ then, from (L4), $(C_2 \cap \neg C_1) \raisebox{-0.2ex}{$\stackrel{\raisebox{-0.4ex}{$\scriptscriptstyle t^*$}}{\scriptstyle\rightsquigarrow}$} E$, hence, from $C_1 \raisebox{-0.2ex}{$\stackrel{\raisebox{-0.4ex}{$\scriptscriptstyle t^*$}}{\scriptstyle\rightsquigarrow}$} E$ and (L3): $0 = m(C_1 \cap \neg U_{-t^*}E) + m((C_2 \setminus C_1) \cap \neg U_{-t^*}E)  = m((C_1 \cup C_2) \cap \neg U_{-t^*}E)$, and thus, from (L3), $(C_1 \cup C_2) \raisebox{-0.2ex}{$\stackrel{\raisebox{-0.4ex}{$\scriptscriptstyle t^*$}}{\scriptstyle\rightsquigarrow}$} E$. \qed

\begin{Definition}\label{def_measurepreserving}
A dynamical system $(P, \Sigma, m, U_t)$ is {\em measure-preserving} if, for any time $t \in \mathbb{R}$ and for any $S \in \Sigma :m(U_{-t}S) = m(S)$.
\end{Definition}

\noindent When $(P, \Sigma, m, U_t)$ is measure-preserving, the following two lemmas also hold: \\
\noindent {\bf (L6)} $C \raisebox{-0.2ex}{$\stackrel{\raisebox{-0.4ex}{$\scriptscriptstyle t^*$}}{\scriptstyle\rightsquigarrow}$} E ~~\leftrightarrow~~ m(C) = m(U_{t^*} C \cap E) ~~\leftrightarrow~~ m(U_{t^*} C \cap \neg E) = 0$.\\
\noindent {\bf Proof.} Apply measure-preservation to $m(C) = m(C \cap U_{-t^*} E)$, then use (L1). \qed \\
\noindent {\bf (L7)} If $C \raisebox{-0.2ex}{$\stackrel{\raisebox{-0.4ex}{$\scriptscriptstyle t^*$}}{\scriptstyle\rightsquigarrow}$} E$, then $U_t C \raisebox{-0.2ex}{$\stackrel{\raisebox{-0.4ex}{$\scriptscriptstyle t^*-t$}}{\scriptstyle\rightsquigarrow}$} E$ for every $0 \leq t < t^*$. \\
\noindent {\bf Proof.} If $C \raisebox{-0.2ex}{$\stackrel{\raisebox{-0.4ex}{$\scriptscriptstyle t^*$}}{\scriptstyle\rightsquigarrow}$} E$, then by (L3) $m(C) = m(C \cap U_{-t^*} E)$, and then, by measure-preservation and (L1): $m(U_{t}C) = m(U_{t}(C \cap U_{-t^*} E)) = m(U_{t}C \cap U_{t}U_{-t^*} E) = m(U_{t}C \cap U_{-(t^*-t)} E)$, and thus, by (L3), $U_{t}C \raisebox{-0.2ex}{$\stackrel{\raisebox{-0.4ex}{$\scriptscriptstyle t^*-t$}}{\scriptstyle\rightsquigarrow}$} E$. \qed

\begin{Proposition}\label{prop_causalsecondlaw}
Suppose $(P, \Sigma, m, U_t)$ is a measure-preserving dynamical system, $C, E \in \Sigma$, and $C \raisebox{-0.2ex}{$\stackrel{\raisebox{-0.4ex}{$\scriptscriptstyle t^*$}}{\scriptstyle\rightsquigarrow}$} E$. Then $m(C) \leq m(E)$.
\end{Proposition}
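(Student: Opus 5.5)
The plan is to push the set $C$ forward by $U_{t^*}$ and compare measures inside $E$, using (L6) and monotonicity of $m$.

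First, since $C \raisebox{-0.2ex}{$\stackrel{\raisebox{-0.4ex}{$\scriptscriptstyle t^*$}}{\scriptstyle\rightsquigarrow}$} E$ and the system is measure-preserving, (L6) gives $m(C) = m(U_{t^*}C \cap E)$. All sets involved lie in $\Sigma$: $U_{t^*}C \in \Sigma$ by (L2), and $\Sigma$ is closed under intersection.

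Second, $U_{t^*}C \cap E \subseteq E$, so monotonicity of the measure yields $m(U_{t^*}C \cap E) \leq m(E)$. Chaining the two steps gives $m(C) \leq m(E)$.

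Alternatively, one can avoid (L6) and argue directly from (L3) and Definition \ref{def_measurepreserving}. By (L3), $m(C) = m(C \cap U_{-t^*}E)$. Since $C \cap U_{-t^*}E \subseteq U_{-t^*}E$, monotonicity gives $m(C) \leq m(U_{-t^*}E)$, and measure-preservation gives $m(U_{-t^*}E) = m(E)$. This version is slightly cleaner because it uses measure-preservation only once, on $E$. In either version there is no real obstacle. The only point needing care is that every set whose measure is taken is measurable, which (L2) and the closure properties of $\Sigma$ guarantee. The argument also works when the measures are infinite, since monotonicity and the equalities hold in $[0,\infty]$.
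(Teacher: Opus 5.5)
Your proof is correct, and your first version is essentially the paper's: the paper also pushes $C$ forward, writing $m(C)=m(U_{t^*}C)=m(U_{t^*}C\cap E)+m(U_{t^*}C\cap\neg E)=m(U_{t^*}C\cap E)\le m(E)$ via (L6), whereas you use the first equivalence of (L6) directly. Your pull-back variant via (L3) instead applies measure-preservation to $E$; both of your versions avoid the null-complement form of (L6), whose equivalence with the definition relies on $m(C)<\infty$, which is why your remark about infinite measures goes through cleanly.
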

\noindent {\bf Proof.} From measure-preservation and (L6), $m(C) = m(U_{t^*} C) = m(U_{t^*} C \cap E) + m(U_{t^*} C \cap \neg E) = m(U_{t^*} C \cap E) \leq m(E)$. Thus, $m(C) \leq m(E)$. \qed

\subsection{Propositions for Section \ref{sect_strictincrease_robust}}

\begin{Proposition}\label{prop_robust_multiple}
Suppose $(P, \Sigma, m, U_t)$ is a measure-preserving dynamical system, $C_i \raisebox{-0.2ex}{$\stackrel{\raisebox{-0.4ex}{$\scriptscriptstyle t^*_i$}}{\scriptstyle\rightsquigarrow}$} E$, $m(C_i) < \infty$ for $i=1,\ldots,N$, and $m( U_{t^*_i-t^*} C_i \setminus U_{t^*_j-t^*}C_j ) \neq 0$ for every $i \neq j$, where $t^* = \min_i \{ t^*_i \}$. Then $m(E) > m(C_j)$ for all $j = 1,\ldots,N$.
\end{Proposition}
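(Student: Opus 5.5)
The plan is to move every cause forward to the common earliest time and apply the argument of Proposition \ref{prop_causalsecondlaw} to their union. Set $t^* = \min_i \{t^*_i\}$ and $s_i = t^*_i - t^* \geq 0$. Define $C_i' = U_{s_i} C_i$. By (L7), if $s_i < t^*_i$ (always true, since $t^* > 0$), then $C_i' \raisebox{-0.2ex}{$\stackrel{\raisebox{-0.4ex}{$\scriptscriptstyle t^*$}}{\scriptstyle\rightsquigarrow}$} E$. For the index attaining the minimum, $s_i = 0$ and there is nothing to do. Measure-preservation gives $m(C_i') = m(C_i) < \infty$. Hence all the shifted causes lead robustly to $E$ with the same characteristic time $t^*$.

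Next, I would use (L5), extended by induction to finitely many sets, to conclude that $\bigcup_i C_i' \raisebox{-0.2ex}{$\stackrel{\raisebox{-0.4ex}{$\scriptscriptstyle t^*$}}{\scriptstyle\rightsquigarrow}$} E$. Proposition \ref{prop_causalsecondlaw} then yields $m(\bigcup_i C_i') \leq m(E)$.

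It remains to show, for a fixed $j$, that $m(\bigcup_i C_i') > m(C_j)$. This is the step where the hypothesis is used, and it needs $N \geq 2$. If $N = 1$ the hypothesis is vacuous and the claim can fail, so I would note that the proposition implicitly assumes $N \geq 2$. Pick any $i \neq j$. Then $\bigcup_k C_k' \supseteq C_j' \cup (C_i' \setminus C_j')$, and this is a disjoint union. Therefore
\[ m\Bigl(\bigcup_k C_k'\Bigr) \geq m(C_j') + m(C_i' \setminus C_j') = m(C_j) + m(C_i' \setminus C_j'), \]
and the last term is strictly positive by the assumption $m(U_{t^*_i - t^*} C_i \setminus U_{t^*_j - t^*} C_j) \neq 0$.

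Finiteness of $m(C_j)$ is what makes the inequality strict. Without it, adding a positive quantity to $\infty$ would not give a strict increase. Chaining the inequalities gives $m(E) \geq m(\bigcup_k C_k') > m(C_j)$ for every $j$.

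The main obstacle I expect is the time-alignment step. I would check carefully that (L7) applies with $t = s_i \in [0, t^*_i)$, which holds because $t^* > 0$. I would also confirm that the hypothesis is phrased precisely in terms of the shifted sets $U_{s_i} C_i$, so that the disjointness argument is applied to the same sets as (L5).
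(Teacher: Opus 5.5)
Your proof is correct and follows the paper's argument: shift each cause by $U_{t^*_i-t^*}$ via (L7), combine the shifted causes with (L5), apply Proposition \ref{prop_causalsecondlaw} to their union, and get strictness from the non-null difference sets. Your remarks that $N \geq 2$ is needed and that the finiteness of $m(C_j)$ is what makes the inequality strict make explicit what the paper's proof leaves implicit.
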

\noindent {\bf Proof.}  From $C_i \raisebox{-0.2ex}{$\stackrel{\raisebox{-0.4ex}{$\scriptscriptstyle t^*_i$}}{\scriptstyle\rightsquigarrow}$} E$ for $i=1,\ldots,N$, it follows from (L5) and (L7) that $\bigcup_i U_{t^*_i-t^*} C_i \raisebox{-0.2ex}{$\stackrel{\raisebox{-0.4ex}{$\scriptscriptstyle t^*$}}{\scriptstyle\rightsquigarrow}$} E$ for $t^* = \min_i \{ t^*_i \}$, and hence, due to Proposition \ref{prop_causalsecondlaw}, $m(E) \geq m(\bigcup_i U_{t^*_i-t^*} C_i)$. But, since $m( U_{t^*_i-t^*} C_i \setminus U_{t^*_j-t^*}C_j ) \neq 0$ for any $i \neq j$, it follows that $m(\bigcup_i U_{t^*_i-t^*} C_i) > m(U_{t^*_j-t^*} C_j) = m(C_j)$ for all $j$, and thus $m(E)  \geq m(\bigcup_i U_{t^*_i-t^*} C_i) > m(C_j)$. \qed

If we assume all characteristic times in Proposition \ref{prop_robust_multiple} to equal, its additional condition entails that $C_i$ and $C_j$ are distinct possible causes, since for $i \neq j$: $m( C_i \setminus C_j ) \neq 0$. In the more general case with different characteristic times, the additional condition expresses that the evolved causes, $U_{t^*_i-t^*} C_i$ and $U_{t^*_j-t^*} C_j$, are distinct when each requires the same remaining time, $t^*$, to produce the effect $E$. 

Let me use the notation $S \approx_{m} T$ for the almost everywhere agreement of $S$ and $T$, that is, when $m((S \setminus T) \cup (T \setminus S))=0$.
\begin{Proposition}\label{prop_robust_mismatch}
There is an example of a $(P, \Sigma, R, {\cal R}, T, m, U_t)$ measure-preserving dynamical system, $C, E \in {\cal R}$, such that  $C \raisebox{-0.2ex}{$\stackrel{\raisebox{-0.4ex}{$\scriptscriptstyle t^*$}}{\scriptstyle\rightsquigarrow}$} E$, but there exists no $S \in {\cal R}$ for which $S \approx_{m} U_{-t^*}E$. \\
In any such case, $m(E) > m(C)$. 
\end{Proposition}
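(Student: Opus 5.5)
The statement has two parts: the existence of an example, and the strict inequality $m(E) > m(C)$ in every such case. I will prove the general inequality first, because it is an easy consequence of (L3) and measure-preservation, and then give a simple continuous-time example.

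For the second part, suppose $C \raisebox{-0.2ex}{$\stackrel{\raisebox{-0.4ex}{$\scriptscriptstyle t^*$}}{\scriptstyle\rightsquigarrow}$} E$. By (L3), $m(C \setminus U_{-t^*}E) = 0$, so $C$ is contained in $U_{-t^*}E$ up to a null set. Measure-preservation then gives
$$m(C) = m(C \cap U_{-t^*}E) \leq m(U_{-t^*}E) = m(E),$$
which reproduces Proposition \ref{prop_causalsecondlaw}.

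Now suppose equality held, and assume $m(C) < \infty$ (this is the same finiteness condition as in Proposition \ref{prop_robust_multiple}). Then
$$m(U_{-t^*}E \setminus C) = m(U_{-t^*}E) - m(C \cap U_{-t^*}E) = 0.$$
Together with $m(C \setminus U_{-t^*}E) = 0$, this gives $C \approx_m U_{-t^*}E$. Since $C \in {\cal R}$, the choice $S = C$ contradicts the hypothesis that no $S \in {\cal R}$ agrees almost everywhere with $U_{-t^*}E$. Hence $m(E) > m(C)$.

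The finiteness assumption cannot be dropped. If $m(C) = \infty$, then $m(E) = \infty$ as well, and the strict inequality fails. Checking this degenerate case is the only real subtlety. I will therefore state the claim for finite $m(C)$, or for a finite measure $m$, which covers the intended examples.

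For the existence part, I need a genuine flow defined for all $t \in \mathbb{R}$. A plain permutation of grid cells would only give discrete time, so I will use a rotation of the circle instead. Let $P = \mathbb{R}/4\mathbb{Z}$ with the Borel $\sigma$-algebra $\Sigma$ and Lebesgue measure $m$, and let $U_t x = x + t \pmod 4$. Each $U_t$ is a bijection, the family forms a group, $\Sigma$ is closed under $U_t$-images, and Lebesgue measure is rotation-invariant, so the system is measure-preserving. Take $R = P$, $T = \{0, 1\}$, and let ${\cal R}$ be the finite $\sigma$-algebra generated by the atoms $A_1 = [0,1)$, $A_2 = [1,2)$ and $A_3 = [2,4)$.

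Set $E = A_3$, $C = A_2$ and $t^* = 1$. Then $U_{-1}E = [1,3)$, which contains $C$, so $C \raisebox{-0.2ex}{$\stackrel{\raisebox{-0.4ex}{$\scriptscriptstyle 1$}}{\scriptstyle\rightsquigarrow}$} E$ by (L3). However, $[1,3)$ cuts the atom $A_3$ into two pieces of positive measure. Every element of ${\cal R}$ is a union of atoms, so none of them agrees almost everywhere with $[1,3)$. This gives the required example, and indeed $m(C) = 1 < 2 = m(E)$, as the general argument predicts.
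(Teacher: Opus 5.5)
Your proof is correct and follows essentially the paper's route. For the example, you use a circle rotation with a coarse finite $\sigma$-algebra, one of whose atoms is cut by $U_{-t^*}E$. For the inequality, you use (L3) plus measure-preservation to force $m(U_{-t^*}E\setminus C)>0$; you argue by contradiction where the paper computes $m(E)=m(C)+m(U_{-t^*}E\setminus C)$ directly.

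Your finiteness caveat is a genuine correction. The paper's last step silently needs $m(C)<\infty$, and with only $\sigma$-finiteness the claim fails. For instance, take translation on $\mathbb{R}$ with $C=E=[0,\infty)$, ${\cal R}$ generated by $C$, and $t^*=1$: all hypotheses hold, yet $m(E)=m(C)=\infty$.
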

\noindent {\bf Proof.} It is easy to find such examples. For instance, let $P$ be the unit circle, $m$ the uniform distribution on $P$, $U_t$ a clockwise rotation by radians $t, C \doteq \bigcup_{-0.1<t<0.1} U_t(\{ 0 \}), E \doteq \bigcup_{-0.2+1<t<0.2+1} U_t(\{ 0 \})$ (i.e., $C$ and $E$ are disjoint arcs on the circle). Let ${\cal R}$ be the smallest sigma-algebra on $P$ generated by the sets $\{C, E \}$, and let $\Sigma$ be the sigma-algebra generated by ${\cal R}$ and $U_t$. Then $(P, \Sigma, P, {\cal R}, m, U_t)$ is measure-preserving, $C, E \in {\cal R}, C \raisebox{-0.2ex}{$\stackrel{\raisebox{-0.4ex}{$\scriptscriptstyle 1$}}{\scriptstyle\rightsquigarrow}$} E$, and for any $S \in {\cal R}:  S {\not\approx}_{m} U_{-1}E$. Clearly,  $m(E)>m(C)$. \\
If $C \raisebox{-0.2ex}{$\stackrel{\raisebox{-0.4ex}{$\scriptscriptstyle t^*$}}{\scriptstyle\rightsquigarrow}$} E$ for $C, E \in {\cal R}$, but there exists no $S \in {\cal R}$ for which $S \approx_{m} U_{-t^*}E$, then, since $C \in {\cal R}$, we have $C {\not\approx}_{m} U_{-t^*}E$. But since from (L3):  $m(C \setminus U_{-t^*}E) = 0$, it follows from $C {\not\approx}_{m} U_{-t^*}E$ that $m(U_{-t^*} E \setminus C) > 0$. But then from measure-preservation and (L3): $m(E) = m(U_{-t^*} E) = m(U_{-t^*} E \cap C) + m(U_{-t^*} E \cap \neg C) = m(C) + m(U_{-t^*} E \setminus C) > m(C)$. \qed

\subsection{Propositions for Section \ref{sect_motivatingtheassumptions}}\label{app_subsect_motivatingtheassumptions}

\subsubsection{Multiple Realizability}\label{app_subsect_multiple_realizability}

When a special science cause or effect can be realized by physical states from distinct phase spaces, the causal second law extends in a natural way.

\begin{Definition}\label{def_disjoint_union_system}
Let $\{ (P^i, \Sigma^i, m^i, U_t^i) \}_{i \in I}$ be a collection of pairwise distinct dynamical systems indexed by $I = \{1,\dots,N \}$. The {\em disjunctive dynamical system} $(P, \Sigma, m, U_t)$ is defined as follows:
\begin{itemize}
\item $P \doteq \bigcup_{i \in I} \left( \{i\} \times P^i \right)$; \\ For each $i \in I$, let $\iota_i : P^i \to P$ be defined as $\iota_i(x) \doteq (i,x)$;
\item $\Sigma \doteq \left\{ S \subseteq P : \iota_i^{-1}(S) \in \Sigma^i \text{ for all } i \in I \right\}$;
\item $m(S) \doteq \frac{1}{N} \sum_{i \in I} m^i(\iota_i^{-1}(S))$ for every $S \in \Sigma$;
\item $U_t(i,x) \doteq (i, U_t^i(x))$ for every $(i,x) \in P$.
\end{itemize}
\end{Definition}
\noindent It is easy to verify that $\Sigma$ is a $\sigma$-algebra, that $m$ is a $\sigma$-finite measure on $\Sigma$ (which is also a probability measure if all $m^i$ are), and that if each $(P^i, \Sigma^i, m^i, U_t^i)$ is measure-preserving, then $(P, \Sigma, m, U_t)$ is also measure-preserving.

\noindent I call a description $C \in \Sigma$ {\em multiply realized} if it consists of a disjunction of descriptions which state-supervene on different dynamical systems, that is, if $C = \bigcup_{i \in I} \iota_i(C_i)$ with $C_i \in \Sigma^i$ for $i \in I$ (where $C_{i_1} \neq \emptyset \neq C_{i_2}$ for some $i_{1} \neq i_{2}$). This motivates the following definition:
\begin{Definition}\label{def_multiply_realized_robust}
Let $(P, \Sigma, m, U_t)$ be the disjunctive dynamical system constructed from $\{ (P^i, \Sigma^i, m^i, U_t^i) \}_{i \in I}$, and let $C, E \in \Sigma$, with $C = \bigcup_{i \in I} \iota_i(C_i), E = \bigcup_{i \in I} \iota_i(E_i)$ with $C_i, E_i \in \Sigma^i$ for $i \in I$. I say that the {\em multiply realized causal regularity from $C$ to $E$ is robust} if, for each $i \in I$, $C_i \raisebox{-0.2ex}{$\stackrel{\raisebox{-0.4ex}{$\scriptscriptstyle t^*$}}{\scriptstyle\rightsquigarrow}$} E_i$ in the dynamical system $(P^i, \Sigma^i, m^i, U_t^i)$.
\end{Definition}

\begin{Proposition}\label{prop_multiply_realized_second_law}
Suppose $(P, \Sigma, m, U_t)$ is a measure-preserving disjunctive dynamical system, and the multiply realized causal regularity from $C$ to $E$ is robust. Then $m(C) \leq m(E)$.
\end{Proposition}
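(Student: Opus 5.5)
Two routes are available, and I would present the componentwise one as the main argument and the global one as a cross-check. The most direct plan is to reduce the statement to Proposition \ref{prop_causalsecondlaw} applied componentwise. By Definition \ref{def_multiply_realized_robust}, robustness of the multiply realized regularity means that $C_i \raisebox{-0.2ex}{$\stackrel{\raisebox{-0.4ex}{$\scriptscriptstyle t^*$}}{\scriptstyle\rightsquigarrow}$} E_i$ holds in each $(P^i, \Sigma^i, m^i, U_t^i)$. If each component system is measure-preserving, Proposition \ref{prop_causalsecondlaw} gives $m^i(C_i) \leq m^i(E_i)$ for every $i \in I$. Since $\iota_i^{-1}(C) = C_i$ and $\iota_i^{-1}(E) = E_i$ (the sets $\{i\}\times P^i$ are pairwise disjoint), the definition of $m$ yields
\[
m(C) = \tfrac{1}{N}\sum_{i\in I} m^i(C_i) \leq \tfrac{1}{N}\sum_{i\in I} m^i(E_i) = m(E).
\]

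The one point needing care is the hypothesis itself: the statement assumes that the disjunctive system is measure-preserving, not that each component is. So I would first verify the converse of the remark after Definition \ref{def_disjoint_union_system}. Take $S_i \in \Sigma^i$ and set $S = \iota_i(S_i)$, which lies in $\Sigma$ because $\iota_j^{-1}(S)$ is $S_i$ or $\emptyset$. Because $U_t$ acts fiberwise, $U_{-t}S = \iota_i(U^i_{-t}S_i)$. Then $m(U_{-t}S) = m(S)$ reads $\tfrac1N m^i(U^i_{-t}S_i) = \tfrac1N m^i(S_i)$. Hence each component is measure-preserving, and the componentwise argument above applies. I expect this small verification to be the only real obstacle; it amounts to bookkeeping with $\iota_i$.

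The global route would show that $C \raisebox{-0.2ex}{$\stackrel{\raisebox{-0.4ex}{$\scriptscriptstyle t^*$}}{\scriptstyle\rightsquigarrow}$} E$ holds in the disjunctive system itself and then apply Proposition \ref{prop_causalsecondlaw} there directly. Using (L3), $\iota_i^{-1}(C \cap \neg U_{-t^*}E) = C_i \cap \neg U^i_{-t^*}E_i$, which has $m^i$-measure zero for each $i$. Summing over $i$ gives $m(C\cap\neg U_{-t^*}E)=0$, so $C \raisebox{-0.2ex}{$\stackrel{\raisebox{-0.4ex}{$\scriptscriptstyle t^*$}}{\scriptstyle\rightsquigarrow}$} E$, and hence $m(C)\le m(E)$. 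This route needs no decomposition of the measure-preservation hypothesis, which confirms the result.
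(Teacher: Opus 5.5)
Your main argument is the paper's own proof: apply Proposition \ref{prop_causalsecondlaw} in each component $(P^i,\Sigma^i,m^i,U^i_t)$ and sum using $m(C)=\frac{1}{N}\sum_{i} m^i(C_i)$. Your preliminary check that measure-preservation of the disjunctive system descends to each component (testing it on $S=\iota_i(S_i)$) supplies a step the paper uses silently, since the paper only remarks on the converse direction. Your global route is also a valid alternative, but there you should sum $m^i(C_i\cap U^i_{-t^*}E_i)=m^i(C_i)$ directly rather than use the complement form of (L3), which tacitly requires $m^i(C_i)<\infty$.
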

\noindent {\bf Proof.} Since $C_i \raisebox{-0.2ex}{$\stackrel{\raisebox{-0.4ex}{$\scriptscriptstyle t^*$}}{\scriptstyle\rightsquigarrow}$} E_i$ for each $i \in I$, Proposition \ref{prop_causalsecondlaw} applied to each component system yields $m^i(C_i) \leq m^i(E_i)$. Hence,
\begin{equation*}
m(C) = \frac{1}{N} \sum_{i \in I} m^i(C_i) \leq \frac{1}{N} \sum_{i \in I} m^i(E_i) = m(E). 
\end{equation*}
\qed

\subsubsection{Portional Causal Regularities}\label{app_subsect_portional}

\begin{Definition}\label{def_arobust}
Suppose $(P, \Sigma, R, {\cal R}, T, m, U_t)$ is an underlying dynamical system, $C, E \in \Sigma$, and $m(E) / m(R) < \alpha \leq 1$. I say that the {\em causal regularity from $C$ to $E$ is $\alpha$-portional} (with characteristic time $t^*$, and with respect to $(P, \Sigma, R, {\cal R}, T, m, U_t)$)---in notation, $C \raisebox{-0.2ex}{$\stackrel{\raisebox{-0.4ex}{$\scriptscriptstyle t^*\!,\alpha$}}{\scriptstyle\rightsquigarrow}$} E$---if $m(\{ p \in C \mid U_{t^*} p \in E \}) = \alpha \cdot m(C)$.
\end{Definition}

\begin{Remark}\label{remark_portional}
When $C, E \in {\cal R}$ and history-supervenience (Definition \ref{def_supervenes}, see later) holds, the additional condition can be rewritten as $m(E) / m(R) = \mu_{0,E, R} < \alpha = \mu_{t^*, E, C}$; thus, for $\alpha = 1$, such a regularity is $\alpha$-portional iff it is robust and non-vague (if $\mu_{0,E, R} \neq 1$). The motivation for the additional condition is the familiar requirement that a cause should increase the probability of its effect: $\mu_{0,E,R} < \mu_{t^*, E, C}$. In practice, this additional condition typically holds. First, the special sciences typically maintain a reasonably high threshold for $\alpha$. Second, the unconditional relative frequency of the occurrence of the effect needs to be small ($\mu_{0,E,R} \ll 1$) for the effect to be of any interest. Thus, $\mu_{0,E,R}$ is typically less than any reasonable choice of $\alpha$. (Let me emphasize that the characteristic time $t^*$ is assumed to be finite. When $m(R)=1$ and the underlying dynamics is so-called mixing, the additional condition cannot hold in the limit $t^* \rightarrow \infty$ since $\lim_{t \rightarrow \infty} \mu_{t, E,C} = \lim_{t \rightarrow \infty} m(E \cap U_{t}C)/m(C) = m(E) = m(R \cap U_0 E) = m(R) \mu_{0,E,R} = \mu_{0,E,R}$, where I invoked \eqref{eq_measurerecoversfrequency} in the first and the fourth, and mixing in the second step.)
\end{Remark}

\begin{Proposition}\label{prop_arobust_multiple}
Suppose $(P, \Sigma, R, {\cal R}, T, m, U_t)$ is a measure-preserving underlying dynamical system, and suppose $C_i \raisebox{-0.2ex}{$\stackrel{\raisebox{-0.4ex}{$\scriptscriptstyle t^*\!,\alpha_i$}}{\scriptstyle\rightsquigarrow}$} E$ for $i=1,\ldots,N$, with $C_i, C_j$ disjoint for $i\neq j$ ($C_i, E \in \Sigma$). If, for an $a \in \{1,\dots,N\}: ~\alpha_a m(C_a) \geq m(C_a) - \sum_{i \neq a} \alpha_i m(C_i)$ (in particular: if $\alpha_a \geq 1 - \frac{1}{m(C_a)} \sum_{i \neq a} \alpha_i m(C_i)$ when $m(C_a) \neq 0$), then $m(E) \geq m(C_a)$.
\end{Proposition}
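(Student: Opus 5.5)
The plan is to bound $m(E)$ from below by the total measure of the portions of the possible causes that actually reach $E$, and then invoke the hypothesis on $\alpha_a$.

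First, for each $i$ set $G_i \doteq C_i \cap U_{-t^*}E$. By the Remark following (L2), this is exactly $\{p \in C_i \mid U_{t^*} p \in E\}$. Definition \ref{def_arobust} then gives $m(G_i) = \alpha_i m(C_i)$. Since the $C_i$ are pairwise disjoint, so are the $G_i$.

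Next, (L1) applied to the bijection $U_{t^*}$ gives $U_{t^*} G_i = U_{t^*}C_i \cap E \subseteq E$. The images $U_{t^*}G_i$ are still pairwise disjoint, because $U_{t^*}$ is injective and (L1) transports intersections. By measure-preservation (Definition \ref{def_measurepreserving}) and (L2), each image is measurable with $m(U_{t^*}G_i) = m(G_i)$. Additivity and monotonicity then yield
\[
m(E) \;\geq\; m\Bigl(\bigcup_{i} U_{t^*} G_i\Bigr) \;=\; \sum_{i} \alpha_i m(C_i) .
\]
This is the portional analogue of Proposition \ref{prop_causalsecondlaw} and of the first step in the proof of Proposition \ref{prop_robust_multiple}. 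The difference is that here we sum disjoint pieces rather than taking a union of robust causes.

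Finally, split off the $a$-th term and apply the hypothesis $\alpha_a m(C_a) \geq m(C_a) - \sum_{i\neq a}\alpha_i m(C_i)$:
\[
m(E) \geq \alpha_a m(C_a) + \sum_{i \neq a}\alpha_i m(C_i) \geq m(C_a).
\]
The parenthetical form of the condition is the same inequality divided by $m(C_a)$ when that quantity is nonzero (and, in the ordinary case, finite). So it implies the main hypothesis and needs no separate treatment.

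The only step that needs care is measure-theoretic: if some $m(C_i)=\infty$, the subtraction in the hypothesis may be undefined. In that case I would read the hypothesis as the equivalent additive form $\alpha_a m(C_a) + \sum_{i\neq a}\alpha_i m(C_i) \geq m(C_a)$, which is all the last step uses. The side condition $m(E)/m(R) < \alpha_i$ in Definition \ref{def_arobust} plays no role in the argument.
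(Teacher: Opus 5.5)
Your proposal is correct and follows essentially the same route as the paper's proof. Both push the pieces $C_i \cap U_{-t^*}E$ forward into $E$, use disjointness and measure-preservation to obtain $m(E) \geq \sum_i \alpha_i m(C_i)$, and then apply the hypothesis on $\alpha_a$; your attribution of the images' disjointness to injectivity alone, and your additive reading of the hypothesis when some $m(C_i)=\infty$, are accurate refinements of the paper's terser argument.
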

\noindent {\bf Proof.} Since $C_i$ are disjoint and $U_t$ is bijective and measure-preserving, the sets $U_{t^*}C_i$ are disjoint; hence $$m(E) \geq m \left(\bigcup_i (U_{t^*}C_i \cap E)\right) = \sum_i m(U_{t^*}C_i \cap E) = \sum_i \alpha_i m(C_i).$$ If $\alpha_a m(C_a) \geq m(C_a) - \sum_{i \neq a} \alpha_i m(C_i)$ then $\sum_i \alpha_i m(C_i) \geq m(C_a)$, and thus $m(E) \geq m(C_a)$. \qed

\subsection{Propositions for Section \ref{sect_dynamicalsystemsapproach}}\label{app_subsect_fourcausalasymmetries}

\subsubsection{Timelike regularities}

\begin{Definition}\label{def_retrorobust}
Suppose $(P, \Sigma, m, U_t)$ is a dynamical system, $C, E \in \Sigma$. I say that the {\em retro-causal regularity from $C$ to $E$ is robust}---in notation, $C \raisebox{-0.2ex}{$\stackrel{\raisebox{-0.4ex}{$\scriptscriptstyle -t^*$}}{\scriptstyle\rightsquigarrow}$} E$---if, for some $t^*>0$ {\em characteristic time}, $m(\{ p \in C \mid U_{-t^*} p \in E \}) = m(C)$.\\
I say that the {\em timelike regularity from $C$ to $E$ is robust}---in notation, $C \raisebox{-0.2ex}{$\stackrel{\raisebox{-0.4ex}{$\scriptscriptstyle \pm t^*$}}{\scriptstyle\rightsquigarrow}$} E$---if either $C \raisebox{-0.2ex}{$\stackrel{\raisebox{-0.4ex}{$\scriptscriptstyle t^*$}}{\scriptstyle\rightsquigarrow}$} E$, or $C \raisebox{-0.2ex}{$\stackrel{\raisebox{-0.4ex}{$\scriptscriptstyle -t^*$}}{\scriptstyle\rightsquigarrow}$} E$. A robust timelike regularity from $C$ to $E$ is {\em vague} if $m(E) = m(P)$; it is {\em trivially retro-causal} if $m(C) = m(E)$; and it is {\em trivial} if it is vague, trivially retro-causal, or if $m(C)=0$.
\end{Definition}

\begin{Proposition}\label{prop_extendedcausalsecondlaw}
Suppose $(P, \Sigma, m, U_t)$ is a measure-preserving dynamical system, $C, E \in \Sigma$, and $C \raisebox{-0.2ex}{$\stackrel{\raisebox{-0.4ex}{$\scriptscriptstyle \pm t^*$}}{\scriptstyle\rightsquigarrow}$} E$. Then $m(E) \geq m(C)$.
\end{Proposition}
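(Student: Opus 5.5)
The statement splits by the definition of a robust timelike regularity (Definition \ref{def_retrorobust}) into two cases: either $C \raisebox{-0.2ex}{$\stackrel{\raisebox{-0.4ex}{$\scriptscriptstyle t^*$}}{\scriptstyle\rightsquigarrow}$} E$ or $C \raisebox{-0.2ex}{$\stackrel{\raisebox{-0.4ex}{$\scriptscriptstyle -t^*$}}{\scriptstyle\rightsquigarrow}$} E$. In the first case there is nothing new to do, since Proposition \ref{prop_causalsecondlaw} directly yields $m(C) \leq m(E)$.

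For the retro-causal case, I would reduce to the forward case by reversing the direction of time. Define $V_t \doteq U_{-t}$ for all $t \in \mathbb{R}$. Then $\{V_t\}$ is again a one-parameter abelian group of bijections. The $\sigma$-algebra $\Sigma$ is closed under $V_t$ images, because it is closed under all $U_s$ images for $s \in \mathbb{R}$. Measure-preservation also transfers: $m(V_{-t}S) = m(U_t S) = m(U_{-(-t)}S) = m(S)$, by Definition \ref{def_measurepreserving} applied at time $-t$. Hence $(P, \Sigma, m, V_t)$ is a measure-preserving dynamical system.

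The hypothesis $m(\{p \in C \mid U_{-t^*}p \in E\}) = m(C)$ reads $m(\{p \in C \mid V_{t^*}p \in E\}) = m(C)$. By Definition \ref{def_robust}, this says that the causal regularity from $C$ to $E$ is robust with respect to $(P,\Sigma,m,V_t)$. Proposition \ref{prop_causalsecondlaw} applied to this reversed system then gives $m(C) \leq m(E)$.

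Alternatively, one could argue directly. By (L1), $\{p \in C \mid U_{-t^*}p\in E\} = C \cap U_{t^*}E$, so $m(C \setminus U_{t^*}E)=0$. Therefore
\[
m(C) = m(C\cap U_{t^*}E) \leq m(U_{t^*}E) = m(E),
\]
where the last equality is measure-preservation with $t=-t^*$. This direct route is in fact slightly cleaner than the reversed-system argument.

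The only point needing care is verifying that the reversed system satisfies every clause of Definition \ref{def_dynamicalsystem}, in particular closure of $\Sigma$ under images and measure-preservation for negative times. Definition \ref{def_measurepreserving} quantifies over all $t\in\mathbb{R}$, so both hold immediately. Since the direct computation avoids this check altogether, I would present that one.
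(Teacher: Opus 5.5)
Your proposal is correct and matches the paper, whose proof simply says the result is an immediate generalization of Proposition \ref{prop_causalsecondlaw}: the forward case is that proposition verbatim, and the retro-causal case follows from the same measure-preservation argument with $-t^*$ in place of $t^*$ (equivalently, by applying it to the reversed flow $U_{-t}$, as you do). In your direct route, the intermediate claim $m(C\setminus U_{t^*}E)=0$ is unnecessary, and it would not follow from the hypothesis if $m(C)=\infty$; the chain $m(C)=m(C\cap U_{t^*}E)\le m(U_{t^*}E)=m(E)$ uses the hypothesis directly and needs nothing more.
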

\noindent {\bf Proof.} Immediate generalization of Proposition \ref{prop_causalsecondlaw}.

\subsubsection{State-supervenience, history-supervenience}\label{app_subsubsect_supervenience}

\noindent The following definition of a (partial) transition structure captures the idea that for some times $t$ and for some descriptions $C, E \in {\cal D}$ the relative frequency of $E$ occurring a time $t$ after $C$ occurred is fixed (in a consistent way):
\begin{Definition}\label{def_transitionprobabilitystructure}
Let $D$ be a set, ${\cal D}$ a $\sigma$-algebra on $D$, $0 \in T \subseteq \mathbb{R}^{\geq 0}$. A \emph{total transition structure} is a tuple $(D, {\cal D}, T, \{ \mu_{u} \}_{u \in T \times {\cal D} \times {\cal D}^{-\emptyset}})$ satisfying the following conditions:
\begin{itemize}
    \item[(i)] For all $C, E \in {\cal D}, C \neq \emptyset$ and all $t \in T: ~0\leq \mu_{t, E, C} \leq 1$.
    \item[(ii)] For every fixed $t \in T$ and $C \in {\cal D}^{-\emptyset}$, the map $\mu_{t, \cdot, C}: {\cal D} \rightarrow [0,1]$ defined by $E \mapsto \mu_{t, E, C}$ is a probability measure on ${\cal D}$.
    \item[(iii)] For all $C \in {\cal D}^{-\emptyset}: \mu_{0,C,C}=1$.
    \item[(iv)] \emph{Consistency (Law of Total Probability):} For any $t \in T, E \in \mathcal{D}$, and any disjoint non-empty sets $A, B \in \mathcal{D}$:
    \begin{equation*}
        \mu_{t, E, A \cup B} \cdot \mu_{0, A \cup B, D} = \mu_{t, E, A} \cdot \mu_{0, A, D} + \mu_{t, E, B} \cdot \mu_{0, B, D}.
    \end{equation*}
\end{itemize}
When $(D, {\cal D}, T, \{ \mu_{u} \}_{u \in T \times {\cal D} \times {\cal D}^{-\emptyset}})$ is a total transition structure and $U \subseteq T \times {\cal D} \times {\cal D}^{-\emptyset}$, I call $(D, {\cal D}, T, \{ \mu_{u} \}_{u \in U})$ a {\em (partial) transition structure (on domain ${\cal D}, T$)}.
\end{Definition}

\begin{Definition}\label{def_robustregularitydoublearrow}
Let $(D, {\cal D}, T, \{ \mu_{u} \}_{u \in U})$ be a transition structure, $(t^*, E, C) \in U, t^*>0$. I say that the {\em causal regularity from $C$ to $E$ is robust (with characteristic time $t^*$, and with respect to $(D, {\cal D}, T, \{ \mu_{u} \}_{u \in U})$)}---in notation: $C \raisebox{-0.2ex}{$\stackrel{\raisebox{-0.4ex}{$\scriptscriptstyle t^*$}}{\substack{\rightsquigarrow\\[-0.3em]\rightsquigarrow}}$} E$---if $\mu_{t^*, E, C} = 1$.\end{Definition}

\noindent We gave two characterizations of a robust causal regularity, one with respect to $(D, {\cal D}, T, \{ \mu_{u} \}_{u \in U})$, another with respect to $(P, \Sigma, m, U_t)$. What connects the two characterizations?

\begin{Definition}\label{def_supervenes}
Let $(D, {\cal D}, T, \{ \mu_{u} \}_{u \in U})$ be a partial transition structure and $(P, \Sigma, R, {\cal R}, T, m, U_t)$ a dynamical system. I say that {\em $(D, {\cal D}, T, \{ \mu_{u} \}_{u \in U})$ history-supervenes on $(P, \Sigma, R, {\cal R}, T, m, U_t)$ (on domain ${\cal R}, T$)}, and I call $(P, \Sigma, R, {\cal R}, T, m, U_t)$ the {\em underlying} dynamical system if: 
\begin{itemize}
	\item[(i)] {\em State-supervenience: } ${\cal D}$ is isomorphic to ${\cal R}$ as a $\sigma$-algebra (in particular, $D$ corresponds to $R$ under the isomorphism);
	\item[(ii)] $0 < m(R) < \infty$ (hence, $m(C) < \infty$ for all $C \in {\cal R}$), and $m$ {\em recovers} $\{ \mu_{u} \}_{u \in U}$ in the sense that for all $(t, E, C) \in U$ (note: by definition, $E,C \in {\cal D}$ and hence, under the isomorphism assumed in (i), $E, C \in {\cal R}$) the so-called {\em Probability Rule} holds: 
\begin{equation}\label{eq_measurerecoversfrequency}
\mu_{t, E, C} \cdot m(C) = m(U_{-t}E \cap C).
\end{equation}
\end{itemize}
\end{Definition}

\begin{Remark}\label{remark_inferredtransitionstructure}
Let $(P, \Sigma, R, {\cal R}, T, m, U_t)$ be a dynamical system, $t_0 \in \mathbb{R}$, and $0 < m(C) < \infty$ for all $\emptyset \neq C\in {\cal R}$. $(P, \Sigma, R, {\cal R}, T, m, U_t)$ then implies, through equation \eqref{eq_measurerecoversfrequency}, the (up to a $\sigma$-algebra isomorphism) unique total transition structure $(R, {\cal R}, \mathbb{R}^{\geq 0}, \{ \mu_{u} \}_{u \in \mathbb{R}^{\geq 0} \times {\cal R} \times {\cal R}^{-\emptyset}})$ that history-supervenes on $(P, \Sigma, R, {\cal R}, T, m, U_t)$. However, Definition \ref{def_supervenes} is not superfluous; it emphasizes the idea that we may have independent access to an empirical transition structure (obtained from observations of transition relative frequencies of descriptions in ${\cal R}$), and the question is whether the implied $ \{ \mu_{u} \}_{u \in \mathbb{R}^{\geq 0} \times {\cal R} \times {\cal R}^{-\emptyset}}$ agrees with this empirical transition structure (wherever the latter is defined).
\end{Remark}

\begin{Proposition}\label{prop_connectrobustness}
Let $(D, {\cal D}, T, \{ \mu_{u} \}_{u \in U})$ be a transition structure that history-supervenes on the dynamical system $(P, \Sigma, R, {\cal R}, T, m, U_t)$, and let $(t^*, E, C) \in U$ (correspondingly, $E, C \in {\cal R}$), $m(C)>0$. Then $C \raisebox{-0.2ex}{$\stackrel{\raisebox{-0.4ex}{$\scriptscriptstyle t^*$}}{\substack{\rightsquigarrow\\[-0.3em]\rightsquigarrow}}$} E$  iff $C \raisebox{-0.2ex}{$\stackrel{\raisebox{-0.4ex}{$\scriptscriptstyle t^*$}}{\scriptstyle\rightsquigarrow}$} E$.
\end{Proposition}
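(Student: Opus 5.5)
The plan is to unfold both definitions and connect them through the Probability Rule \eqref{eq_measurerecoversfrequency}. By Definition \ref{def_robustregularitydoublearrow}, $C$ robustly leads to $E$ with respect to the transition structure exactly when $\mu_{t^*,E,C}=1$. By Definition \ref{def_robust} together with (L3), $C$ robustly leads to $E$ with respect to the dynamical system exactly when $m(C\cap U_{-t^*}E)=m(C)$. Since $(t^*,E,C)\in U$ and the transition structure history-supervenes on the dynamical system, clause (ii) of Definition \ref{def_supervenes} gives
\begin{equation*}
\mu_{t^*,E,C}\cdot m(C)=m(U_{-t^*}E\cap C).
\end{equation*}
The isomorphism of clause (i) lets us read $C$ and $E$ as elements of ${\cal R}\subseteq\Sigma$.

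For the forward direction, I will assume $\mu_{t^*,E,C}=1$. The displayed equation then gives $m(C)=m(U_{-t^*}E\cap C)$, which by (L3) is the dynamical robustness condition. For the converse, I will assume $m(C\cap U_{-t^*}E)=m(C)$. The displayed equation becomes $\mu_{t^*,E,C}\, m(C)=m(C)$, and I will divide by $m(C)$ to conclude $\mu_{t^*,E,C}=1$.

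The only step needing care is that division. It requires $0<m(C)<\infty$. Positivity is a hypothesis of the proposition. Finiteness follows from clause (ii) of Definition \ref{def_supervenes}: since $m(R)<\infty$ and $C\subseteq R$, we get $m(C)<\infty$. So the division is legitimate and no $\infty\cdot$ or $0\cdot$ indeterminacy arises. Measure-preservation is not needed anywhere in this argument, and I would point that out.
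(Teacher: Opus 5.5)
Your proof is correct and matches the paper's argument step for step. Both directions come directly from the Probability Rule \eqref{eq_measurerecoversfrequency}, and the converse divides by $m(C)$, which is justified by the hypothesis $m(C)>0$ and by $m(C)\le m(R)<\infty$ from clause (ii) of history-supervenience; you are also right that measure-preservation is never used, and the paper's proof does not use it either.
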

\noindent {\bf Proof.} Assume the conditions. If $C \raisebox{-0.2ex}{$\stackrel{\raisebox{-0.4ex}{$\scriptscriptstyle t^*$}}{\substack{\rightsquigarrow\\[-0.3em]\rightsquigarrow}}$} E$, then $\mu_{t^*, E, C} = 1$, hence $1 \cdot m(C) = m(U_{-t^*}E \cap C) = m(\{ p \in C \mid U_{t^*} p \in E \})$, and thus $C \raisebox{-0.2ex}{$\stackrel{\raisebox{-0.4ex}{$\scriptscriptstyle t^*$}}{\scriptstyle\rightsquigarrow}$} E$. Conversely, if $C \raisebox{-0.2ex}{$\stackrel{\raisebox{-0.4ex}{$\scriptscriptstyle t^*$}}{\scriptstyle\rightsquigarrow}$} E$, then $m(C) = m(U_{-t^*}E \cap C)$, and since $m(C)>0$ (and $m(C) < \infty$ due to $C \in {\cal R}, m(R) < \infty$), $1 = \frac{m(U_{-t^*}E \cap C)}{m(C)} = \mu_{t^*, E, C}$, hence $C \raisebox{-0.2ex}{$\stackrel{\raisebox{-0.4ex}{$\scriptscriptstyle t^*$}}{\substack{\rightsquigarrow\\[-0.3em]\rightsquigarrow}}$} E$. \qed

\begin{Remark}\label{remark_getridof}
$C \raisebox{-0.2ex}{$\stackrel{\raisebox{-0.4ex}{$\scriptscriptstyle t^*$}}{\substack{\rightsquigarrow\\[-0.3em]\rightsquigarrow}}$} E$ is only defined for $C, E \in {\cal D}$ (correspondingly, for $C, E \in {\cal R}$) and for $t^* \in T$, while $C \raisebox{-0.2ex}{$\stackrel{\raisebox{-0.4ex}{$\scriptscriptstyle t^*$}}{\scriptstyle\rightsquigarrow}$} E$ may also hold for some $C, E \in \Sigma$ for which $C, E \not\in {\cal R}$ and for $t^* \in \mathbb{R}$. Hence, even though $\substack{\rightsquigarrow\\[-0.3em]\rightsquigarrow}$ has the conceptual advantage of being defined without assuming history-supervenience, $\raisebox{0.4ex}{$\scriptstyle\rightsquigarrow$}$ provides a mathematically more general definition of a robust causal regularity from which relevant properties of $\substack{\rightsquigarrow\\[-0.3em]\rightsquigarrow}$ can be inferred when history-supervenience is assumed (see Remark \ref{remark_inferredtransitionstructure}). This motivates working mathematically only with $\raisebox{0.4ex}{$\scriptstyle\rightsquigarrow$}$ and dropping references to the implied $\substack{\rightsquigarrow\\[-0.3em]\rightsquigarrow}$. In the same spirit, I omitted introducing the respective $\substack{\rightsquigarrow\\[-0.3em]\rightsquigarrow}$ for various generalizations of $\raisebox{0.4ex}{$\scriptstyle\rightsquigarrow$}$ in Sections \ref{app_subsect_motivatingtheassumptions} and \ref{app_subsect_fourcausalasymmetries}.
\end{Remark}

\begin{Proposition}\label{prop_causalsecondlaw2}
Suppose $(P, \Sigma, R, {\cal R}, T, m, U_t)$ is a measure-preserving dynamical system, $(R, {\cal R}, T, \{ \mu_{u} \}_{u \in U})$ history-supervenes on $(P, \Sigma, R, {\cal R}, T, m, U_t)$, $(0,C,R) \in U$, $(0,E,R) \in U$, and $C \raisebox{-0.2ex}{$\stackrel{\raisebox{-0.4ex}{$\scriptscriptstyle t^*$}}{\scriptstyle\rightsquigarrow}$} E$. Then $\mu_{0,C,R} \leq \mu_{0,E,R}$.
\end{Proposition}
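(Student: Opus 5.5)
The plan is to reduce the statement to Proposition \ref{prop_causalsecondlaw} by way of the Probability Rule \eqref{eq_measurerecoversfrequency}, applied at time $0$ with conditioning set $R$.

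First, since $(0,C,R) \in U$ and $(0,E,R) \in U$, history-supervenience (Definition \ref{def_supervenes}(ii)) yields
\[
\mu_{0,C,R}\, m(R) = m(U_0 C \cap R) \quad\text{and}\quad \mu_{0,E,R}\, m(R) = m(U_0 E \cap R).
\]
Because $\{U_t\}$ is a group, $U_0$ is the identity. Moreover, $C, E \in {\cal R}$ is a $\sigma$-algebra on $R$, so $C, E \subseteq R$. Hence $U_0 C \cap R = C$ and $U_0 E \cap R = E$, which gives $\mu_{0,C,R} = m(C)/m(R)$ and $\mu_{0,E,R} = m(E)/m(R)$. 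The division is legitimate because $0 < m(R) < \infty$ by Definition \ref{def_supervenes}(ii).

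Second, the hypotheses that the system is measure-preserving and that $C \raisebox{-0.2ex}{$\stackrel{\raisebox{-0.4ex}{$\scriptscriptstyle t^*$}}{\scriptstyle\rightsquigarrow}$} E$ give $m(C) \leq m(E)$ directly by Proposition \ref{prop_causalsecondlaw}. Here $C, E \in {\cal R} \subseteq \Sigma$, so that proposition applies. Dividing by the positive finite number $m(R)$ then yields $\mu_{0,C,R} \leq \mu_{0,E,R}$.

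There is no real obstacle. The only points requiring care are bookkeeping ones. One is identifying the descriptions in ${\cal D}$ with sets in ${\cal R}$ via the isomorphism of Definition \ref{def_supervenes}(i); here $D = R$ already, so this identification is immediate. The other is confirming that $R$ belongs to ${\cal R}$ and is non-empty, so that $(0,\cdot,R)$ is a legitimate index in ${\cal D}\times{\cal D}^{-\emptyset}$. This follows from $m(R) > 0$.
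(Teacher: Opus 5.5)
Your proof is correct and follows the same route as the paper. Both get $m(C)\le m(E)$ from Proposition \ref{prop_causalsecondlaw}, convert it via the Probability Rule \eqref{eq_measurerecoversfrequency} at $t=0$ with conditioning set $R$, and divide by $0<m(R)<\infty$. Your observation that $U_0$ is the identity and $C,E\subseteq R$, so that $m(U_{0}C\cap R)=m(C)$, simply makes explicit a step the paper leaves implicit.
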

\noindent {\bf Proof.} Assume the conditions. From Proposition \ref{prop_causalsecondlaw}, $m(C) \leq m(E)$. When $(R, {\cal R}, T, \{ \mu_{u} \}_{u \in U})$ history-supervenes on $(P, \Sigma, R, {\cal R}, T, m, U_t)$ and $(0,C,R), (0,E,R) \in U$, eq. \ref{eq_measurerecoversfrequency} entails $m(C) = \mu_{0,C,R} m(R), m(E) = \mu_{0,E,R} m(R)$, and since $0< m(R) <\infty$, hence $\mu_{0,C,R} \leq \mu_{0,E,R}$. \qed

\endgroup

\end{document}